\newtheorem{theorem}{Theorem}
\begin{document}

%\title{Identify Influential Nodes using h2-index}
%\title{Identify Influential Nodes using h2-index and Hill-Climbing based approach to hit the Most Influential Nodes}
%\title{Identify Influential Nodes using h2-index and Hill-Climbing based approach to hit Top Nodes}

\title{Estimating Shell-Index in a Graph with Local Information}

\author{%
  Akrati Saxena*\\akrati.saxena@iitrpr.ac.in
  \and S. R. S. Iyengar*\\sudarshan@iitrpr.ac.in
  \and *Department of Computer Science and Engineering,\\ Indian Institute of Technology Ropar, India
  }
\date{}

\maketitle

\begin{abstract}
For network scientists, it has always been an interesting problem to identify the influential nodes in a given network. The k-shell decomposition method is a widely used method which assigns a shell-index value to each node based on its influential power. The k-shell method requires the global information of the network to compute the shell-index of a node that is infeasible for large-scale real-world dynamic networks. In this work, we propose a method to estimate the shell-index of a node using its local information. We also propose hill-climbing based approach to hit the top-ranked nodes in a small number of steps. We further discuss a method to estimate the rank of a node based on the proposed estimator.

\end{abstract}

%\IEEEpeerreviewmaketitle

\section{Introduction}

In real-world networks, hierarchical organization of networks gives rise to the core-periphery structure. The concept of the core-periphery structure was first proposed by Borgatti and Everett \cite{borgatti2000models} in 2000. The core is a densely connected nucleus of the network that is surrounded by sparsely connected periphery nodes. Core nodes are highly connected with each other and also highly connected with periphery nodes. Authors also proposed a method to detect the core-periphery structure based on the adjacency matrix division, such that the number of edges is maximum among core nodes that is preceded by the number of edges between core and periphery, and periphery nodes have very less number of edges among themselves. The complexity of the proposed matrix division method is very high, and so, it is infeasible for large-size real-world networks. Carmi et al. employed the k-shell decomposition algorithm first time to study the core-periphery structure in the Internet network and classified the network into three categories \cite{carmi2007model}.

The k-shell decomposition method was proposed by Seidman and it is a well-known method in social network analysis to identify the tightly knit group of influential core nodes in unweighted networks \cite{seidman1983network}. This algorithm works by recursively pruning the nodes from lower degree to higher degree. First, we recursively remove all nodes of degree 1, until there is no node of degree 1. All these nodes are assigned shell-index $k_s=1$. In the next step, we remove all nodes of degree 2 if any. While pruning the degree 2 nodes, new nodes having degree 2 or less can be generated and will also be removed in the same iteration. All these nodes will be assigned shell-index $k_s=2$. Similarly, nodes of degree 3, 4, 5, ... are pruned step by step. When we remove nodes of degree $k$, if there appears any node of degree less than $k$, it will also be removed in the same step. All these nodes are assigned shell-index $k$. This method thus divides the entire network into shells and assigns a shell-index to each node. The shell-index increases as we move from the periphery to the core and the higher shell-index represents the higher coreness. The innermost shell has the highest shell-index $k_{max}$ and is called the core of the network. Batagelj and Zaver{\v{s}}nik proposed an order $O(m)$ algorithm to compute the coreness of all nodes, where $m$ is the total number of edges in the network \cite{batagelj2011fast}.

%why kshell is important

Many studies have shown that core nodes are highly influential nodes in a network.  In the entire discussion, influential nodes refer to the nodes having the higher spreading power. Kitsak et al. \cite{kitsak2010identification} showed that information spreads faster if the spreading is started from a core node rather than a periphery node. Saxena et al. showed the importance of core nodes in information diffusion on networks with meso-scale structures \cite{saxena2015understanding}. The results conclude that the information becomes viral once it hits the core nodes and spreads into multiple communities through the core. Pei and Maske studied the correlation of various centrality measures with the influential power of the nodes on real-world networks and observed that shell-index is an effective centrality measure to identify the most influential nodes \cite{pei2013spreading, pei2014searching}. 

Shell-index is a widely used method to compute the influential power of a node, but it has its disadvantages. Firstly, to compute the shell-index of a node using the k-shell decomposition method, we need the entire network, which is infeasible for large-scale dynamic networks. Secondly, the k-shell method assigns the same index values to many nodes which actually might have different influential power, as shown in \cite{zareie2018hierarchical, wang2016fast, zeng2013ranking}.

In this work, we show that the shell-index value of a node can be estimated using its $h^2-index$ which can be computed using the local neighborhood information of the node. We study the correlation of shell-index and $h^2-index$ with the spreading power of a node and observe that the $h^2-index$ has good correlation with the spreading power and can be used in real-world networks to identify the influential nodes. We further show that the $h^2-index$ has better monotonicity than the shell-index.

In most of the real-world applications, we only need to identify top influential nodes to spread the information without having the global information of the network. We implement hill-climbing based algorithms using the proposed shell-index estimator to identify the top-ranked nodes in a small number of steps. The detailed algorithm and results are discussed in Section~\ref{kshellsec2}. Next, we propose a heuristic method to estimate the influential rank of a node without computing the index value of all the nodes. The proposed method is discussed in Section~\ref{kshellsec3}.

The main contributions of the work are as follows.
\begin{enumerate}
\item Discuss a method to compute the shell-index of a node using the shell-index value of its neighbors.
\item Propose an estimator for the shell-index value of a node.
\item Hill-climbing based methods to identify top-ranked nodes using the proposed estimator.
\item A heuristic method to estimate the percentile rank of a node based on the proposed estimator.
\end{enumerate}

The rest of this paper is organized as follows. In Section~\ref{kshellrelwork}, we discuss the related literature. Section~\ref{prelim} covers the preliminary definitions. In Section~\ref{kshellsection1}, we discuss the estimation of shell-index using $h^2-index$ and experimental results. In Section~\ref{kshellsec2}, we discuss hill-climbing based approaches to identify the top-ranked nodes and their simulation on real-world networks. In Section~\ref{kshellsec3}, we discuss a heuristic method to estimate the percentile rank of a node. Section~\ref{conclusion} concludes the paper with future directions.

\section{Related Work}\label{kshellrelwork}

% influential nodes and centrality measure

In recent years, the problem of identifying influential nodes has attracted researchers from different areas like computer science, epidemiology, biology, statistics, etc. Researchers have defined various centrality measures to compute the importance of a node based on its characteristics and the given application context. These centrality measures can be categorized as local centrality measures and global centrality measures. Centrality measures that can be computed using local neighborhood information of the node are called local centrality measures like degree centrality \cite{shaw1954some}, h-index \cite{hirsch2005index}, semi-local centrality \cite{chen2012identifying} etc. The centrality measures that require the entire network for their computation are called global centrality measures like closeness \cite{sabidussi1966centrality}, betweenness \cite{freeman1977set}, eigenvector \cite{stephenson1989rethinking}, shell-index \cite{seidman1983network}, PageRank \cite{brin1998anatomy} etc. The computational complexity of global centrality measures is very high, and it depends on the network size.

%kshell is a good measure of influential power: kitsak work
The use of a specific centrality measure to identify important nodes is highly application dependent. Recently Kitsak et al. showed that shell-index is highly correlated with the spreading power of a node \cite{kitsak2010identification}. These results have been supported by many other studies that show that the core nodes are highly influential and the information flows into multiple communities through the core \cite{saxena2015understanding, gupta2016modeling}.

%kshell extensions
Initially, the k-shell decomposition method was defined for undirected unweighted networks, but recently it has been extended to different types of networks. Garas et al. extended the k-shell method to identify the core-periphery structure in weighted networks \cite{garas2012k}. They defined the weighted degree that considers both the degree as well as the weights of the connected edges. Then, the weighted degree is used while applying the k-shell decomposition method. Eidsaa et al. also proposed a method to identify the core-periphery structure in weighted networks where they only consider the strength of the nodes while pruning the nodes in each iteration \cite{eidsaa2013s}. This method is known as s-shell decomposition method or strength decomposition method. Wei et al. proposed an edge-weighting k-shell method where they consider both the degree as well as the edge-weights, and the edge weight is computed by adding the degree of its two endpoints \cite{wei2015weighted}. The importance of both of these parameters can be set using a tuning parameter which varies from 0 to 1.
% The strength of a node is defined as, $s_i=\sum_{j \epsilon \Gamma (i)} W_{ij}$, where $W_{ij}$ denotes the weight of an edge connecting nodes $i$ and $j$.

Researchers have shown that the shell-index can be efficiently used to identify the influential nodes in a network \cite{kitsak2010identification, gupta2016modeling}. However, the k-shell method assigns the same index values to many nodes which actually might have different influential power \cite{zareie2018hierarchical, wang2016fast, zeng2013ranking}. Zeng et al. modified the k-shell decomposition method and proposed a mixed degree decomposition (MDD) method which considers both the residual degree and the exhausted degree of the nodes while assigning them index values \cite{zeng2013ranking}. Liu et al. proposed an improved ranking method that considers both the shell-index of the node and its distance with the highest shell-index nodes \cite{liu2013ranking}. The proposed method computes the shortest distance of all nodes with the highest shell-index nodes, so, it has high computational complexity. Liu et al. showed that in real-world networks some core-like groups are formed which are not true-core \cite{liu2015improving}. The nodes in these groups are tightly connected with each other but have very few links outside. Based on this observation, the authors filtered out redundant links which have low diffusion power but supports non-pure core groups to be formed, and then apply k-shell decomposition method. Authors show that the coreness computed on this new graph is a better measure of influential power and it is highly correlated with the spreading power computed using SIR model in the original graph.

%kshell hybrid method
Researchers have also proposed hybrid centrality measures by combining the shell-index with other centrality measures. Hou et al. introduced the concept of all-around score to find influential nodes \cite{hou2012identifying}. All-around score of a node is defined as, $Score=\sqrt{\left \| d \right \|^2 + \left \| C_B \right \| ^2 +\left \| k_s \right \|^2 }$, where $d$ is the degree, $C_B$ is the betweenness centrality, and $k_s$ is the shell-index of the node. The degree takes care of local connectivity of the node, betweenness takes care of shortest paths that represent global information, and shell-index represents the position of the node with respect to the center. The total time complexity of the complete process is $O(n \cdot m)$, as it depends on the complexity of betweenness centrality that has the highest complexity. 

Basaras et al. proposed a hybrid centrality measure based on degree and shell-index and showed that it works better than the traditional shell-index \cite{basaras2013detecting}. Bae and Kim proposed a method where the centrality value of a node is computed based on the shell-index value of its neighbors; it thus considers both degree and shell-index value of the nodes \cite{bae2014identifying}. The results show that the proposed method outperforms other methods in scale-free networks with community structure. Tatti and Gionis proposed a graph decomposition method that considers both the connectivity as well as the local density while the k-shell decomposition method only considers the connectivity of the nodes \cite{tatti2015density}. The running time of the proposed algorithm is $O(n^2 \cdot m)$. They further proposed a linear-time algorithm that computes a factor-2 approximation of the optimal decomposition value. All the discussed centrality measures have better monotonicity, but they require global information of the network to be computed; and so, they are not favorable in large-scale networks.
%kshell related work

Real-world networks are highly dynamic, and it is infeasible to re-compute the shell-index of each node for every single change in the network. Li et al. proposed a method to update the shell-index value of the affected nodes, whenever a new edge is added or deleted from the network \cite{li2014efficient}. Jakma et al. proposed the first continuous, distributed method to compute shell-index in dynamic graphs \cite{jakma2012distributed}. Pechlivanidou et al. proposed a distributed algorithm based on MapReduce to compute the k-shell of the graph \cite{pechlivanidou2014mapreduce}. Montresor et al. proposed an algorithm to apply the k-shell method on live distributed systems \cite{montresor2013distributed}. The execution time of the proposed algorithm is bounded by $1+\sum_{u \in V}[d(u)-k_s(u)]$, and it gives 80 percent reduction in the execution time on the considered real-world networks. Dasari et al. proposed a k-shell decomposition algorithm called ParK that reduces the number of random access calls and the size of the working set while computing the shell-index in larger graphs \cite{dasari2014park}. They further proposed a faster algorithm which involves parallel execution to compute the k-shell in larger graphs. Sariyuce et al. proposed the first incremental k-shell decomposition algorithms for streaming networks \cite{sariyuce2013streaming}. The proposed method has million times speed-up than the original k-shell method on a network having 16 million nodes. Lu et al. explored the relationship between the degree, h-index \cite{hirsch2005index}, and coreness of a node \cite{lu2016h}. They showed that the h-index family of a node converges to the coreness of the node.

%Kshell has wide applications in literature and have 
K-shell method has been widely used in literature to study different networks. Catini et al. used shell-index to identify clusters in PubMed scientific publications \cite{catini2015identifying}. To identify the clusters, a graph is created where the nodes are the publications, and there is an edge between two nodes if the distance between the locations of the corresponding researchers is less than the threshold. In the experiments, authors have taken the threshold 1 km. Based on the k-shell decomposition, authors categorized the cities into monocore and multicore. Later on, the journal impact factors are used to quantify the quality of research of each core. Results show that k-shell decomposition method can be used to identify the research hub clusters.
%geographical location of the universities/research labs

Core-periphery structure has been studied in the wide variety of networks, such as financial network \cite{fricke2015core, barucca2016disentangling}, human-brain network \cite{bassett2013task, park2013structural}, nervous system of C. elegans worm \cite{chatterjee2007understanding}, blog networks \cite{obradovic2009journey}, scientific publication network \cite{catini2015identifying}, collaboration network \cite{leydesdorff2013international, hu2008visual}, protein interaction networks \cite{luo2009core}, communication network of software development team \cite{crowston2006core, amrit2010exploring, setia2012peripheral, cataldo2008communication}, hollywood collaboration network \cite{cattani2014insiders}, language network \cite{fedorenko2014reworking}, YouTube social interaction network \cite{paolillo2008structure}, metabolic networks \cite{zhao2007modular}, etc. %Researchers have proposed evolving models to generate synthetic networks having core-periphery structure \cite{saxena2016evolving, adeniji2017generative}. 
Karwa et al. proposed a method to generate all graphs for a given shell-index sequence \cite{karwa2017statistical}.

%identify core node using local walk

In one of our works, we studied the properties of the core like its size and density \cite{saxena2016evolving}. We further studied how the core is evolved with time and how the core nodes acquire a specific position in the network. We observed that making more connections with the existing core nodes help a node to shift into the core regardless of the total number of connections. Based on our observations, we proposed evolving models for both unweighted and weighted networks having community and core-periphery structure. We further studied the core-periphery structure in multilayered terrorist networks \cite{gera2017three, miller2018discovering} and proposed an evolving model to generate synthetic multilayered networks having similar characteristics \cite{adeniji2017generative}.

With time the size of real-world networks is increasing exponentially, so, it is infeasible to collect the entire network to study its global properties. Researchers have focussed to propose fast and efficient methods to identify influential nodes and their ranking in the given network \cite{saxena2017global}. Saxena et al. have proposed methods to estimate the degree rank of a node using power-law degree distribution \cite{saxena2015rank, saxena2015estimating} and sampling techniques \cite{saxena2017observe, saxena2017degree, saxena2018estimating}. In \cite{saxena2017fast, saxena2017afaster}, authors proposed heuristic methods to fast estimate the closeness rank of a node. In this work, we will discuss a heuristic method to estimate the influential rank of a node using local information.

%Once the shell-index value is estimated using local information, we aim to propose methods to directly estimate the influential rank of a node without computing the influential value of all nodes. The proposed idea is discussed in detail under Future Directions.

\section{Preliminaries}\label{prelim}

%\subsection{kshell Decomposition Method}

\subsection{H-Index}\label{hindex}

The h-index of a node $u$ ($h-index(u)$) is $h$ if $h$ of its neighbors have degree at least $h$ and there is no subset of $h+1$ neighbors where each node belonging to that subset has the degree at least $h+1$.

%define h2index also
The $h^2-index$ of a node $u$ ($h^2-index(u)$) is computed by taking its $h-index$ based on the $h-index$ of its neighbors and not their degrees.

Note: The h-index of a list $l$ ($h-index(l)$) is $h$ if $h$ is the highest number such that $h$ entries of the list are equal to or greater than $h$.

\subsection{SIR Model}\label{secsir}

We use the Susceptible-Infected-Recovered (SIR) spreading model to simulate the spreading process on real-world networks and compute the spreading power of each node. In the SIR model, a node can be in three possible states: $1.$ S (susceptible), $2.$ I (infected), and $3.$ R (recovered). Initially, all nodes are in the susceptible mode except one, which is infected and will start spreading the infection in the network. The infected node will infect each of its susceptible neighbors with infection probability ($\lambda$). If the neighbor gets infected, its status is changed to Infected. Once an infected node contacts all of its neighbors to infect them, its status is changed to Recovered with probability $\mu$. For generality, we set $\mu = 1$. Recovered nodes will neither be infected anymore nor infect others, and they remain Recovered until the spreading stops.

Initially, we infect a single node, and all other nodes are susceptible. Then the infection starts spreading from the seed node to the others through links. The spreading process stops when there is no infected node in the network. The number of recovered nodes is considered the spreading power or spreading capability of the original node. We execute the SIR model 100 times from each node and take the average of the spreading power to compute the final spreading power of the node. The infection probability is taken as $\lambda > \lambda_c$, where $\lambda_c=\left \langle d \right \rangle/(\left \langle d^2 \right \rangle - \left \langle d \right \rangle)$ is the epidemic threshold computed using mean-field theory, where $d$ represents the degree of the node \cite{castellano2010thresholds}.

\section{Section 1: Shell-Index Estimation}\label{kshellsection1}% using $H^2-Index$

In this section, we discuss a method to estimate the shell-index using local neighborhood information.

\begin{theorem}\label{kshelltheorem1}
The shell-index of a node $u$ can be computed as $k_s(u)=h-index(k_s(v) | \forall v \in ngh(u))$, where $ngh(u)$ is the set of the neighbors of node $u$.
\end{theorem}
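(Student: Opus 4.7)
The plan is to prove the identity through two matching inequalities, using the equivalent characterization of the shell-index: $k_s(u)=k$ if and only if $u$ belongs to the $k$-core but not to the $(k+1)$-core, where the $j$-core is defined as the unique maximal subgraph in which every vertex has degree at least $j$. The peeling procedure described in the introduction produces exactly these cores, so this reformulation is harmless and much easier to reason about structurally than the iterative pruning itself. Let $h := h\text{-}index(k_s(v)\mid v\in ngh(u))$; the goal is to show $k_s(u)=h$.

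For the lower bound $k_s(u)\ge h$, I would unpack the definition of the h-index of the neighbor list: there exist at least $h$ neighbors $v_1,\dots,v_h\in ngh(u)$ with $k_s(v_i)\ge h$, so each $v_i$ lies in the $h$-core $H_h$. The key step is then to note that adjoining $u$ to $H_h$ cannot decrease any existing degree, while $u$ itself acquires degree at least $h$ in $H_h\cup\{u\}$ via the edges $uv_1,\dots,uv_h$. Because the $h$-core is the \emph{maximal} subgraph with minimum degree $\ge h$, this forces $u\in H_h$, which is precisely $k_s(u)\ge h$.

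For the upper bound $k_s(u)\le h$, I would argue by contradiction: suppose $k_s(u)\ge h+1$. Then $u$ sits in the $(h{+}1)$-core, so it has at least $h+1$ neighbors inside that subgraph. Each such neighbor is in the $(h{+}1)$-core and therefore satisfies $k_s(v)\ge h+1$. This exhibits $h+1$ neighbors of $u$ whose shell-indices are all $\ge h+1$, contradicting the defining maximality clause of $h = h\text{-}index(k_s(v)\mid v\in ngh(u))$. Combining the two bounds yields $k_s(u)=h$.

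The main obstacle, and the step I would write most carefully, is the lower-bound direction: it is tempting but incorrect to treat ``$u$ has $h$ neighbors in $H_h$'' as immediately implying ``$u\in H_h$.'' The argument genuinely needs the closure property that the $h$-core absorbs any vertex whose neighborhood into it has size at least $h$, which in turn rests on the uniqueness/maximality of the $k$-core. I would likely isolate this closure property as a short preparatory observation (a one-line lemma) so that the proof of Theorem~\ref{kshelltheorem1} itself reduces to definition-chasing on the h-index of a list once the $k$-core reformulation is in place.
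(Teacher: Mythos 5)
Your proof is correct, but it takes a genuinely different route from the paper. The paper argues operationally through the peeling process itself: it asserts that in the $i$-th iteration of the decomposition a node is removed when it has exactly $i$ connections to nodes of shell-index $\ge i$, and then reads off that $u$ must fall in iteration $h$. That argument is short and intuitive but leans on an informal (and somewhat imprecise) description of what happens in each pruning round, including the within-round cascading of removals. You instead work with the structural characterization $k_s(u)=k \iff u \in (k\text{-core}) \setminus ((k{+}1)\text{-core})$ and prove two matching inequalities: the lower bound via the closure/maximality property of the $h$-core (adjoining $u$ to $H_h$ preserves minimum degree $\ge h$, so maximality forces $u\in H_h$), and the upper bound by contradiction from the degree condition inside the $(h{+}1)$-core. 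Your version is more rigorous --- you correctly identify that ``$u$ has $h$ neighbors in $H_h$'' does not by itself put $u$ in $H_h$ without the maximality of the core, which is exactly the point the paper's pruning narrative glosses over --- at the modest cost of a preparatory lemma establishing the equivalence between the peeling output and the maximal-subgraph definition of cores. Both arguments reach the same conclusion; yours would be the one to write down if a referee asked for a fully formal proof.
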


\begin{proof}
Let's assume that h-index of $(k_s(v) | \forall v \in ngh(u))$ is $h$ then there are at least $h$ nodes having shell-index equal to or greater than $h$ as per the definition of h-index.

Now, we will see how the shell-index of a node $u$ will be decided in the k-shell decomposition method. In the k-shell decomposition method, in $i_{th}$ iteration all those nodes are removed who have exactly $i$ connections with the nodes having the shell-index $i$ or greater than $i$. Thus, the node $u$ will be removed in $h_{th}$ iteration as $h$ of its neighbors have shell-index $h$ or greater than $h$. This is nothing but the $h-index$ of node $u$ based on the shell-indices of its neighbors as defined above.
\end{proof}

Next, we explain Theorem~\ref{kshelltheorem1} using examples. The first example is shown in Figure~\ref{exa1}(a) where node $u$ has eight neighbors having shell-indices 1, 2, 3, 3, 4, 6, 8, and 10. Now, we will see how the shell-index of node $u$ will be determined during the k-shell decomposition method. All the iterations are shown in Figure~\ref{exa1}. In the $1_{st}$ iteration, first of its neighbors will be removed and node $u$ will be left with seven connections with the nodes having the shell-indices greater than $1$, so, the node $u$ will not be removed in the first iteration. In the $2_{nd}$ iteration, its second neighbor will be removed as it has shell-index 2, but still, the node $u$ has six connections with the higher shells, so, it will not be removed. In the third iteration, its third and fourth neighbors will be removed as both of these neighbors have shell-index 3. The node $u$ still has four connections with the higher shells, so it will not be removed. In the fourth iteration, 5th of its neighbors having shell-index 4 will be removed, and now the node $u$ has only three connections with the higher shells, so, as per the k-shell decomposition method, node $u$ will also be removed in the fourth iteration. So, the shell-index of node $u$ is 4 that is nothing but the $h-index$ of the shell-indices of its neighbors.

%\begin{figure}[htp]
%  \centering
%    \includegraphics[width=.95\linewidth]{}
%  \caption{An Example: A subgraph of a network to explain shell-index Computation using shell-indices of neighbors}
%  \label{exa1}
%\end{figure}

%\begin{figure}[htp]
%  \centering
%  \subcaptionbox{Initial subgraph having all connections of node $u$}[.45\linewidth][c]{%
%    \includegraphics[width=.45\linewidth]{}}\quad
%  \subcaptionbox{After Iteration 1}[.45\linewidth][c]{%
%    \includegraphics[width=.45\linewidth]{}}
%  \subcaptionbox{After Iteration 2}[.45\linewidth][c]{%
%    \includegraphics[width=.45\linewidth]{}}\quad
%  \subcaptionbox{After Iteration 3}[.30\linewidth][c]{%
%    \includegraphics[width=.30\linewidth]{}}
%  \caption{Example 1: Estimate shell-index of node $u$ while applying k-shell decomposition algorithm}
%  \label{exa1}
%\end{figure}

\begin{figure*}[]
     \begin{center}
        \subfigure[Initial subgraph having all connections of node $u$]{%
            \label{fig:first}
            \includegraphics[width=0.45\textwidth]{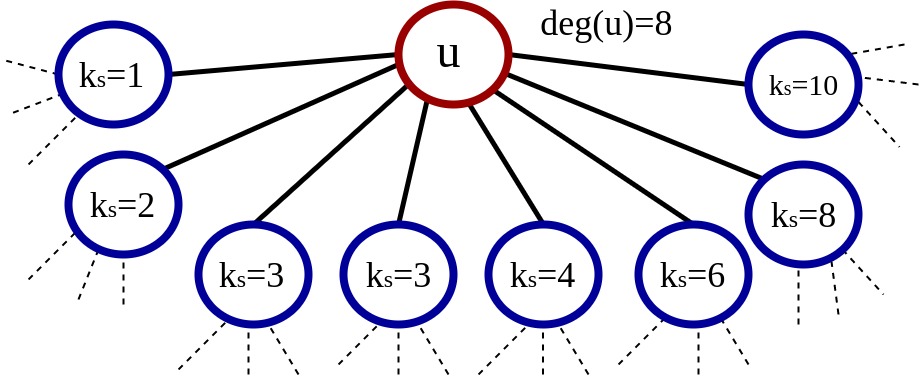}  
        }\quad%
        \subfigure[After Iteration 1]{%
           \label{fig:second}
           \includegraphics[width=0.45\textwidth]{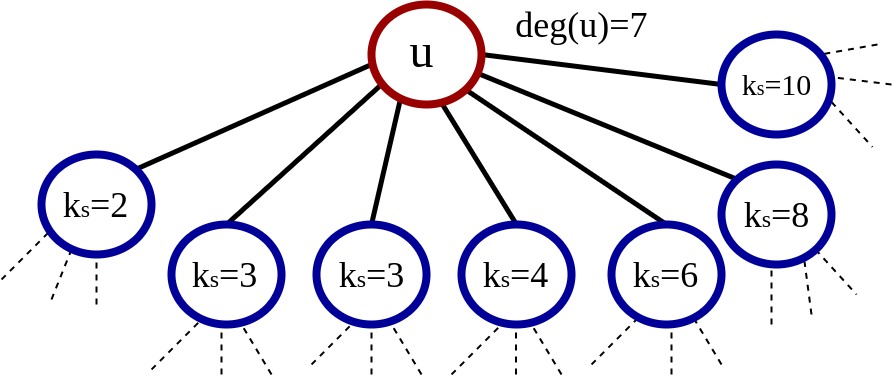}
        }\quad
        \subfigure[After Iteration 2]{%
            \label{fig:first}
            \includegraphics[width=0.45\textwidth]{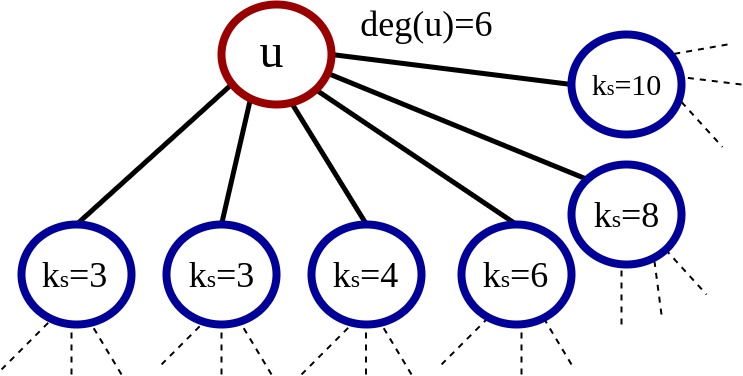}  
        }
        \subfigure[After Iteration 3]{%
            \label{fig:first}
            \includegraphics[width=0.30\textwidth]{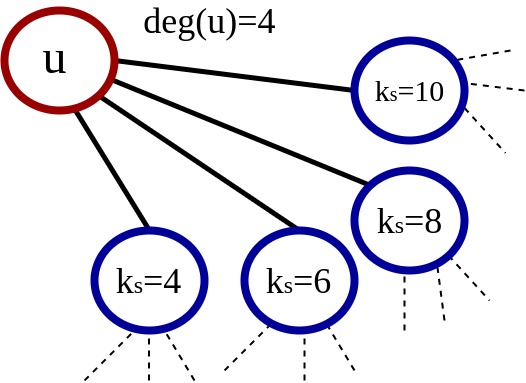}  
        }\quad%
    \caption{Example 1: Estimate shell-index of node $u$ while applying k-shell decomposition algorithm}
   \label{exa1}
   \end{center}
\end{figure*}

%add figures and check grammarly
Similarly, in Figure~\ref{exa2}, node $u$ has degree 7, and the shell-indices of its neighbors are 1, 2, 2, 3, 3, 3, and 5. Now in the 1st iteration of the k-shell decomposition method, 1st of its neighbors will be removed. Node $u$ still has six connections with the higher shells, so, it will not be removed. In the 2nd iteration, second and third of its neighbors will be removed and node $u$ still has four connections with the higher shells, so, it will not be removed. In the 3rd iteration, 4th, 5th, and 6th of its neighbors will be removed. Node $u$ now has only one connection with the higher shells, so, it will also be removed in the same iteration. Thus, it has shell-index 3.

\begin{figure}[htp]
  \centering
    \includegraphics[width=.55\linewidth]{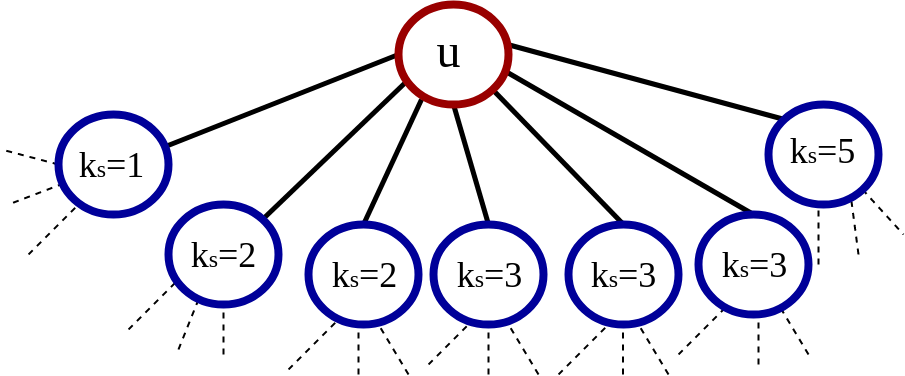}
  \caption{Example 2: A subgraph of a network to explain shell-index computation using shell-indices of neighbors}
  \label{exa2}
\end{figure}

In Figure~\ref{exa3}, node $u$ has degree 6 and the shell-indexes of its neighbors are 10, 11, 11, 13, 15, and 25. In 1st, 2nd, ..., 5th iteration, the node $u$ will not be removed as it has six connections with the higher shells. In the 6th iteration, node $u$ will be removed, so, the shell-index of node $u$ is 6, i.e., the h-index of the shell-indices of its neighbors.

\begin{figure}[htp]
  \centering
    \includegraphics[width=.55\linewidth]{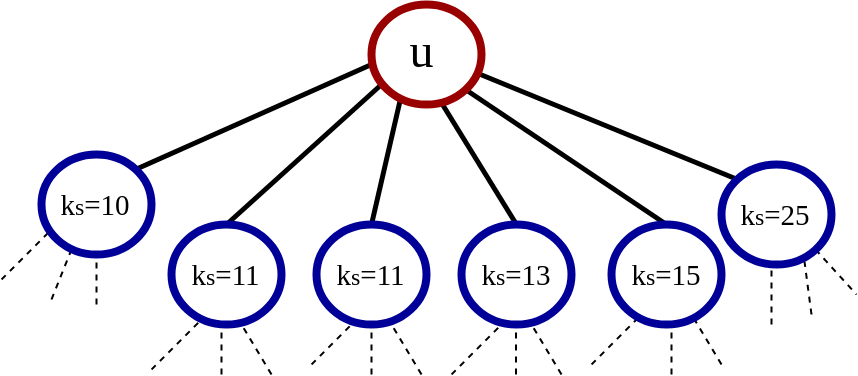}
  \caption{Example 3: A subgraph of a network to explain shell-index computation using shell-indices of neighbors}
  \label{exa3}
\end{figure}

\subsection*{Shell-Index Estimation}

Using Theorem~\ref{kshelltheorem1}, the shell-index of a node can be estimated if the shell-indices of its neighbors are known. However, in real life applications, the shell-indices of the neighbors will not be known. We know that the shell-index of a node is bounded above by its degree, $k_s(v) \leq d(v)$. So, to estimate the shell-index of node $u$, we can consider the degree of its neighbors in place of their shell-index. This is nothing but the $h-index$ of node $u$ as defined in Section~\ref{hindex}. To further improve the estimation, we consider the $h-index$ of its neighbors as $k_s(v) \leq h-index(v) \leq d(v)$ and this is nothing but the $h^2-index$ of the node. Thus the shell-index of a node can be estimated using its $h^2-index$. Next, we will study the performance of the proposed estimator in real-world networks.

Pseudo-code for the proposed method is given in Algorithm~\ref{kshellalgo1}, where $ngh(u)$ is the list of the neighbors of node $u$. The proposed estimator can be further improved if we compute the $h^3-index$ of the node, however, in the Results section, we show that $h^2-index$ itself is a good estimator. It can be computed faster and requires less neighborhood information than to compute the $h^3-index$ of the node.
%In our analysis, we use $h^2-index$ to estimate the $shell-index$ and to measure the spreading power of a node.  % and function $h-index(list)$ returns the $h-index$ of the given list. 

\begin{algorithm}
\caption{$Compute\_h^2-index(G,u)$}\label{kshellalgo1}%Compute $h^2-index$, 
\begin{algorithmic}
\STATE Take a list $ngh\_hindex$, $ngh\_hindex=[\; ]$
\FOR{each $v$ in $ngh(u)$}
\STATE add $h-index(v)$ in $ngh\_hindex$
\ENDFOR
\STATE Return $h-index(ngh\_hindex)$
\end{algorithmic}
\end{algorithm}

%\begin{algorithm}
%\caption{$Compute\_h2index(G,u)$}%Compute $h^2-index$, 
%\begin{algorithmic}
%\STATE take a list $ngh\_hindex$, $ngh\_hindex=[\; ]$
%\STATE take a list $ngh\_deg$, $ngh\_deg=[\; ]$
%\FOR{each $v$ in $ngh(u)$}
%\STATE $ngh\_deg=[\; ]$
%\FOR{each $w$ in $nbr(v)$}
%\STATE add $deg(w)$ in $ngh\_deg$
%\ENDFOR
%\STATE add $h-index(ngh\_deg)$ in $ngh\_hindex$
%\ENDFOR
%\STATE return $h-index(ngh\_hindex)$
%\end{algorithmic}
%\end{algorithm}

\subsection{Results and Discussion}

We study the performance of $h^2-index$ versus shell-index on the following real-world networks.

\subsubsection*{Datasets}

\begin{enumerate}
\item \textbf{Astro-Ph:} This is the co-authorship network of ArXiv's Astrophysics (Astro-ph) publications where authors are represented by nodes, and there is an edge between two nodes if the corresponding authors have published together \cite{leskovec2007graph}. It contains 14845 nodes and 119652 edges.

\item \textbf{Buzznet:} Buzznet is a subgraph extracted from a social networking site that is used to share the photo, journal, and video \cite{zafarani2009social}. It consists of 101163 nodes and 2763066 edges.

\item \textbf{Cond-Mat:} This is the co-authorship network of ArXiv's condensed matter section \cite{leskovec2007graph}. This dataset covers all papers from January 1993 to April 2003 (124 months). It contains 13861 nodes and 44619 edges.

\item \textbf{DBLP:} This is a coauthorship network extracted from DBLP computer science bibliography, where an edge denotes that the authors have common publications \cite{yang2015defining}. This network contains 317080 nodes and 1049866 edges.

\item \textbf{Digg:} This friendship network was extracted from Digg website in 2009 \cite{hogg2012social}. It contains 261489 nodes and 1536577 edges.

\item \textbf{Enron:} This network is the email communication network of the employees of Enron organization from 1999 to 2003 \cite{klimt2004enron}. Nodes of the network are email addresses, and there is an edge between two nodes if they have communicated at least once. The dataset has 84384 nodes and 295889 edges.

\item \textbf{Facebook:} This dataset is an induced subgraph of Facebook \cite{viswanath2009evolution}, where users are represented by nodes and friendships are represented by edges. It contains 63392 nodes and 816831 edges.

\item \textbf{FB-Wall:} This is a network where nodes represent Facebook users, and there is an edge between two users if any one of them posts on the Facebook-wall of another user \cite{viswanath2009evolution}. It contains 43953 nodes and 182384 edges.

\item \textbf{Foursquare:} Foursquare is a location-based social networking software for mobile devices that can be accessed using GPS. This dataset is an induced subgraph of friendships of Foursquare  \cite{zafarani2009social}. It contains 639014 nodes and 3214985 edges.

\item \textbf{Gowalla:} This friendship network is extracted from a location-based social network called, Gowalla \cite{cho2011friendship}. Gowalla was used to share the locations among its users. It contains 196591 nodes and 950327 edges.

\end{enumerate}

The experimental results are shown in Table~\ref{sec1results}. First, we compute the monotonicity of shell-index and $h^2-index$. Ideally, if a node has influential power different from other nodes, it should be assigned a different index value. In the k-shell decomposition method, all the nodes which are pruned at one level are assigned the same shell-index value. Researchers have shown that they can have different influential power and should have been assigned different values \cite{zareie2018hierarchical, wang2016fast, zeng2013ranking}. A better centrality measure should assign the same index value to fewer nodes and allocate more unique values. This characteristic of the measure can be captured using the monotonicity \cite{bae2014identifying}. This is defined as,

\begin{center}
$M(R)=\left( 1-\frac{\sum_{r \in R}n_r(n_r-1)}{n(n-1)} \right) ^2$
\end{center}
where $R$ denotes the ranking values of all the nodes based on any given centrality measure, $n$ represents the size of $R$, i.e., the number of nodes in our case, and $n_r$ represents the number of nodes having rank $r$. If all nodes have the same rank, the monotonicity $(M)$ of the ranking is 0, and it is not a valid ranking measure. If each node has a unique rank then the monotonicity $(M)$ is 1. Results in Table~\ref{sec1results} show that the monotonicity of $h^2-index$ is either the same or slightly better than the shell-index.
%, and it shows that the given ranking measure is perfect.

Next, we study the correlation of shell-index and $h^2-index$ with the spreading power of the node. The spreading power of a node is computed by executing the SIR model (SIR model is defined in Section~\ref{secsir}) 100 times and taking the average of its spreading powers. In experiments, the infection probability $\lambda$ is taken as $\lambda=\left \langle d \right \rangle/(\left \langle d^2 \right \rangle - \left \langle d \right \rangle) + 0.01$. To study the correlation, we compute Kendall's Tau $(\tau)$, Pearson $(r)$, and Spearman $(\rho)$ correlation coefficients. Results in Table~\ref{sec1results} show that the correlation of $h^2-index$ with the spreading power is either as good as the correlation of shell-index with the spreading power or better.

We further study how the correlation of shell-index and $h^2-index$ with the spreading power changes as we vary the infection probability. Results are shown in Figure~\ref{fig2} for Astro-physics collaboration network and FB-wall social interaction network. The epidemic threshold value ($\lambda_c$) for Astro-Ph and FB-Wall network is 0.02 and 0.04 respectively, so, the considered range of the infection probability is taken $0.05-0.14$, i.e., greater than the $\lambda_c$ for both the networks. The results show that $h^2-index$ has a good correlation with varying infection probabilities. %We also observe that the correlation of both the centrality measures decreases as we increase the infection probability that is not desirable. It helps to conclude that the proposed centrality measure can be further improved using other available local information of the node.

We thus conclude that the $h^2-index$ is a better centrality measure and a good estimator of the shell-index in real-world networks. $h^2-index$ has an advantage over shell-index that it is a local centrality measure and can be computed for a node using local neighborhood information without gathering the entire network.

\begin{landscape}
\vspace*{20mm}
\begin{table*}[htp]
\centering
\caption{Performance of Shell-Index ($k_s$) and $h^2-index$ using monotonicity and SIR spreading model}
\label{sec1results}
\resizebox{\columnwidth}{!}{%
\begin{tabular}{|l|l|l|l|l|c|c|c|c|c|c|c|}
\hline
Network & Ref &  Nodes & Edges & \multicolumn{2}{|c|}{Monotonicity} & \multicolumn{3}{|c|}{$k_s$ vs. SIR} & \multicolumn{3}{|c|}{$h^2-index$ vs. SIR} \\ \hline

	&	&	&	&  $k_s$ & $h^2-Index$ & Kendall &  Pearson & Spearman & Kendall &  Pearson & Spearman \\	\hline
%	&	&	&	&  $k_s$ & $h^2-Index$ & $\tau$ &  $r$ & $\rho$ &  $\tau$ & $r$ & $\rho$ \\	\hline
																			
Astro-Ph	&	\cite{leskovec2007graph}	&	14845	&	119652	&	0.89	&	0.89	&	0.51	&	0.67	&	0.67	&	0.52	&	0.67	&	0.68	\\ \hline
Buzznet	&	\cite{zafarani2009social}	&	101163	&	2763066	&	0.93	&	0.93	&	0.21	&	0.28	&	0.30	&	0.21	&	0.28	&	0.30	\\ \hline
Cond-Mat	&	\cite{leskovec2007graph}	&	13861	&	44619	&	0.75	&	0.76	&	0.55	&	0.69	&	0.69	&	0.56	&	0.70	&	0.70	\\ \hline
DBLP	&	\cite{yang2015defining}	&	317080	&	1049866	&	0.74	&	0.75	&	0.49	&	0.61	&	0.61	&	0.49	&	0.62	&	0.62	\\ \hline
Digg	&	\cite{hogg2012social}	&	261489	&	1536577	&	0.45	&	0.45	&	0.48	&	0.60	&	0.59	&	0.48	&	0.60	&	0.59	\\ \hline
Enron	&	\cite{klimt2004enron}	&	84384	&	295889	&	0.30	&	0.30	&	0.45	&	0.58	&	0.55	&	0.45	&	0.58	&	0.55	\\ \hline
Facebook	&	\cite{viswanath2009evolution}	&	63392	&	816831	&	0.91	&	0.91	&	0.49	&	0.64	&	0.65	&	0.49	&	0.64	&	0.65	\\ \hline
Fb-Wall	&	\cite{viswanath2009evolution}	&	43953	&	182384	&	0.76	&	0.77	&	0.62	&	0.75	&	0.76	&	0.62	&	0.75	&	0.76	\\ \hline
Foursquare	&	\cite{zafarani2009social}	&	639014	&	3214985	&	0.50	&	0.50	&	0.54	&	0.66	&	0.65	&	0.54	&	0.66	&	0.65	\\ \hline
Gowalla	&	\cite{cho2011friendship}	&	196591	&	950327	&	0.73	&	0.74	&	0.59	&	0.71	&	0.72	&	0.59	&	0.71	&	0.72	\\ \hline
																	
\end{tabular}	
}																	
\end{table*}
\end{landscape}

\begin{figure}[htp]
     \begin{center}
        \subfigure[Astro-Ph Collaboration Network]{%
            \label{fig:first50}
            \includegraphics[width=0.95\textwidth]{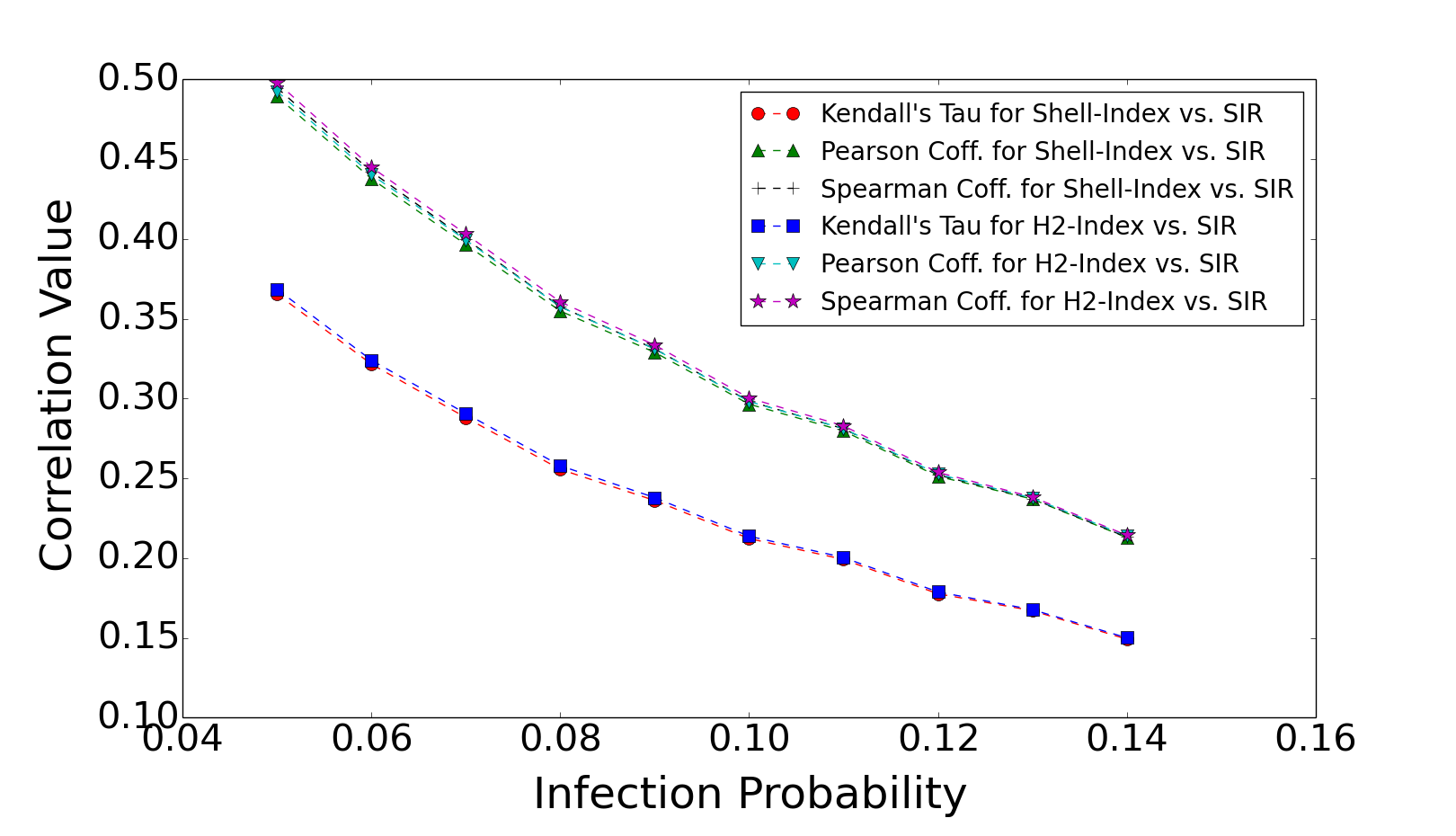}   
        }\quad%
        \subfigure[FB-Wall Social Interaction Network]{%
           \label{fig:second50}
           \includegraphics[width=0.95\textwidth]{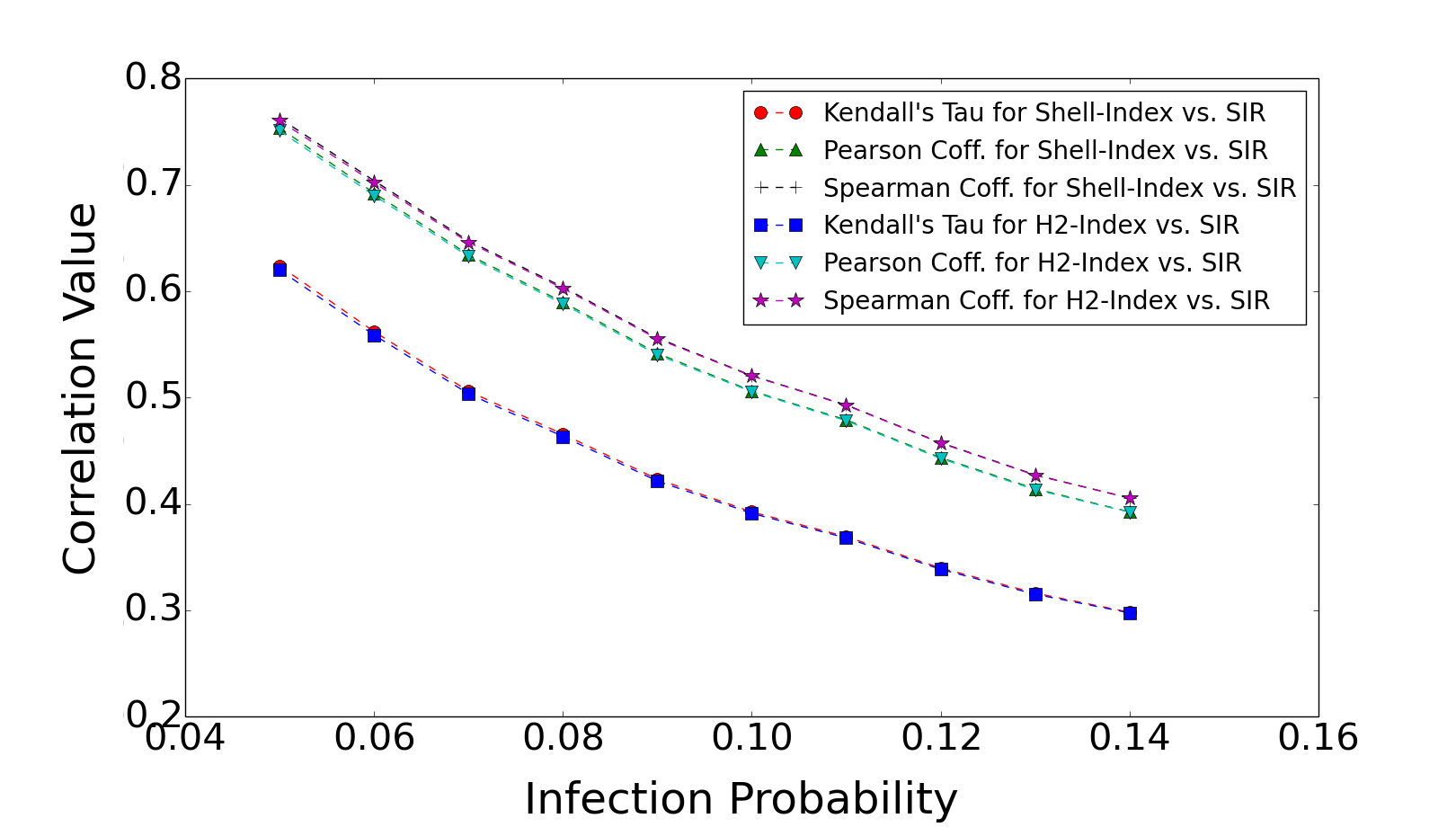}
        }
    \caption{The correlation of shell-Index and $h^2-index$ with the spreading power for varying infection probability on a. Astro-Ph and b. FB-Wall network}
   \label{fig2}
   \end{center}
\end{figure}

\section{Section 2: Hill-Climbing based approach to Identify Top-Ranked Nodes}\label{kshellsec2}

In many real-life applications, we need to identify top influential nodes to spread the information faster. For example, if a marketing company wants to provide free samples of their products, they would like to find out influential people in the network who can help them in popularizing their product. Similarly, if someone wants to spread a virus using the Internet network, she would like to find out and infect a node having the highest spreading power.

In large-scale networks, it is infeasible to collect the entire network to identify the top influential nodes. This motivates us to propose local information based methods for identifying top nodes without having the global information like network size, average degree, the highest $h^2-index$ of the network, and so on. Researchers have proposed various sampling-based methods to estimate global properties of the network like network size \cite{kurant2012graph}, global clustering coefficient \cite{hardiman2013estimating}, average degree \cite{dasgupta2014estimating}, degree distribution of the network \cite{ribeiro2012estimation}, and so on.

In this work, we have shown that the influential power of a node can be computed using its $h^2-index$, i.e. a local measure. For a given node, even if we compute its $h^2-index$ locally, we do not know how influential this node is in the entire network, and whether this node belongs to the highest influential nodes or not. In \cite{gupta2016pseudo}, Gupta et al. proposed Hill-Climbing based methods that can be used to hit the highest shell-index nodes faster in a network. In the proposed method, a random walker starts from a periphery node and move to one of its neighbors having the highest shell-index until a top node is found. The proposed method cannot be applied in practice as the shell-index is a global centrality measure, but the results showed that the proposed hill climbing walk hit the top nodes in a very small number of steps. 
%Researchers have proposed various random walk based methods to estimate global properties of the network like network size \cite{kurant2012graph}, global clustering coefficient \cite{hardiman2013estimating}, average degree \cite{dasgupta2014estimating}, degree distribution of the network \cite{ribeiro2012estimation}, etc.

In this work, we have shown that the influential power of a node can be computed using its $h^2-index$, i.e., a local measure. We modify the proposed methods to identify the node having the highest $h^2-index$ in the network. In rest of the discussion, the \textit{top-ranked nodes} refer to the nodes having the highest $h^2-index$ in the network. The algorithm is called $IndexBasedHillClimbing(G,u,k,maxindex)$, and its pseudo code is given in Algorithm~\ref{algo2}. The inputs of the algorithm are $G$, $u$, $k$, and $maxindex$ where $G$ is the given network, $u$ is the seed node from which the crawler starts crawling the network, $k$ is the repeat-count that shows how many times the crawler will restart the walk from a randomly chosen neighbor of the current node if it is stuck to a local maxima, and $maxindex$ is the maximum $h^2-index$ in the given network. A node is called local maxima if its $h^2-index$ is higher than all of its neighbors. The nodes having the highest $h^2-index$ in the network are called global maxima. For the clarification, a local maxima can also be the global maxima.

The algorithm works in the following manner. The crawler starts from the given node $u$, and it moves to one of its neighbors that has not been visited before and has the highest $h^2-index$. The crawler keeps moving until it hits the local maxima. If this node has the highest $h^2-index$, the algorithm exits, else, the crawler jumps to one of its non-visited neighbors uniformly at random, and the repeat-count is increased by one. The same procedure is repeated until the highest $h^2-index$ node is identified or the repeat-count reaches its maximum value. If the algorithm reaches to maximum repeat-count without finding out the highest $h^2-index$ node, it returns ``The algorithm is failed to find out the top-ranked node."  In Algorithm~\ref{algo2}, $ngh(u)$ represents a set of all neighbors of node $u$. $randomchoice(set)$ function returns a random element from the given set.

\begin{algorithm}
\caption{$IndexBasedHillClimbing(G,u,k,maxindex)$}
\label{algo2}
\begin{algorithmic}
\STATE Take a list $visited\_nodes$ and $visited\_nodes = [\;]$
\STATE $num\_of\_steps=0$
\STATE $repeat\_count=0$
\STATE $current\_node=u$
\STATE $next\_node=u$
\STATE add $u$ in $visited\_nodes$
\STATE $flag=True$
\WHILE{$flag == True$}
%\STATE $next\_node=current\_node$
\FOR {$each \; v \; in \; ngh(current\_node)$} 
\IF {$h^2-index(v) \geq h^2-index(next\_node) \; and \; v \notin visited\_nodes$}
\STATE $next\_node=v$
\ENDIF
\ENDFOR
\IF {$next\_node == current\_node$}
\IF {$h^2-index(next\_node) == maxindex$}
\STATE $flag=False$
\ELSIF {$repeat\_count<k$}
%\STATE $next\_node=randomchoice_{v \notin visited\_nodes}ngh(current\_node)$
\STATE $next\_node=randomchoice(\{ v \; | \; v \in ngh(current\_node) \; and \; v \notin visited\_nodes \})
$
\STATE $repeat\_count=repeat\_count+1$
\ELSE
\STATE Print ``The algorithm is failed to find out the top-ranked node."
\STATE $flag=False$
\ENDIF
%\ELSE
%\STATE $current\_node=next\_node$
\ENDIF
\IF {$flag==True$}
\STATE add $next\_node$ in $visited\_nodes$	
\STATE $current\_node=next\_node$
\STATE $num\_of\_steps=num\_of\_steps+1$
\ENDIF
\ENDWHILE
\STATE Return $h^2-index(current\_node)$
\end{algorithmic}
\end{algorithm}

The proposed method is further modified as shown in Algorithm~\ref{algo3} named $IndexAndDegreeBasedHillClimbing(G,u,k,maxindex)$. In this method, the crawler will move to one of the highest degree nodes among the non-visited neighbors having the highest $h^2-index$. Beside this, there is one more change; once the algorithm stuck to local maxima, the crawler moves to one of its non-visited neighbors having the highest degree. Intuitively it seems that the highest degree node will have a high probability to be connected with the top-ranked nodes, and so, this algorithm will work faster and better than the first one.

\begin{algorithm}
\caption{$IndexAndDegreeBasedHillClimbing(G,u,k,maxindex)$}
\label{algo3}
\begin{algorithmic}
\STATE Take a list $visited\_nodes$ and $visited\_nodes = [\;]$
\STATE $num\_of\_steps=0$
\STATE $repeat\_count=0$
\STATE $current\_node=u$
\STATE $next\_node=u$
\STATE add $u$ in $visited\_nodes$
\STATE $flag=True$
\WHILE{$flag == True$}
%\STATE $next\_node=current\_node$
\FOR {$each \; v \; in \; ngh(current\_node)$} 
\IF {$h^2-index(v) \geq h^2-index(next\_node) \; and \; v \notin visited\_nodes$}
\STATE $next\_node=v$
\ENDIF
\ENDFOR\
\FOR {$each \; v \; in \; ngh(current\_node)$} 
\IF {$h^2-index(v) == h^2-index(next\_node) \; and \; deg(v) \geq deg(next\_node) \; and \; v \notin visited\_nodes$}
\STATE $next\_node=v$
\ENDIF
\ENDFOR\
\IF {$next\_node == current\_node$}
\IF {$h^2-index(next\_node) == maxindex$}
\STATE $flag=False$
\ELSIF {$repeat\_count<k$}
%\STATE $next\_node=$
%\STATE $randomchoice_{v \notin visited\_nodes}ngh(current\_node)$
\FOR {$each \; v \; in \; ngh(current\_node)$} 
\IF {$deg(v) \geq deg(next\_node) \; and \; v \notin visited\_nodes$}
\STATE $next\_node=v$
\ENDIF
\ENDFOR\
\STATE $repeat\_count=repeat\_count+1$
\ELSE
\STATE Print ``The algorithm is failed to find out the top-ranked node."
\STATE $flag=False$
\ENDIF
%\ELSE
%\STATE $current\_node=next\_node$
\ENDIF
\IF {$flag==True$}
\STATE add $next\_node$ in $visited\_nodes$	
\STATE $current\_node=next\_node$
\STATE $num\_of\_steps=num\_of\_steps+1$
\ENDIF
\ENDWHILE
\STATE Return $h^2-index(current\_node)$
\end{algorithmic}
\end{algorithm}
%We further modify the algorithm as proposed in \cite{gupta2016pseudo}. In this updated version, the random walker to move one of its neighbors having the highest h2-index and highest degree. The rest of the algorithm is same and the results are shown in table X.

\subsection*{Discussion}

We implement the proposed methods on real-world networks, and the results are shown in Table~\ref{sec2results}. In the experiments, the algorithm is executed from all non top-ranked nodes (the nodes not having the highest $h^2-index$), and the number of steps taken to hit a top-ranked node are averaged to compute the average number of steps. In experiments, the value of repeat-count $(k)$ is set to 50. In Table~\ref{sec2results}, the average and the standard deviation of the number of steps are shown in $Avg(steps)$ and $Std(steps)$ columns. The $Avg(count)$ shows the average of the number of repeat-count that algorithm takes when it is stuck to a local maxima. The $Algo\;Failed(\%)$ shows how many times the algorithm has not succeeded to hit the top-ranked node in the given repeat-count. The algorithm might succeed if we increase the value of repeat-count.

The results show that on an average, a top-ranked node can be reached in very few steps (3-213 in the considered datasets), and the average value of the repeat-count is 0-10. The algorithm is failed in very few cases for the repeat-count 50. In the case of $IndexAndDegreeBasedHillClimbing(G,u,k,maxindex)$ algorithm, the average number of steps and the average repeat-count is reduced but the probability of failing the algorithm is increased in the given repeat-count as the crawler always moves to higher degree nodes and ends up following the same path that leads to the local maxima. In DBLP network, the probability of failure is much higher when we apply degree based hill-climbing approach as the algorithm is mostly stuck to a local maxima due to following the same route and not able to hit the global maxima in the given repeat-count. This highly depends on the network structure and not on its size.

When we apply these algorithms in practice, we do not know the highest $h^2-index$ ($maxindex$), the algorithms, therefore, will have to be repeated few more times to make sure that the node returned by the algorithm is the actual global maxima and not the local maxima. The results show that a smaller value of $repeat-count$ will suffice the purpose. To increase the efficiency, the crawlers can be started from a few randomly chosen nodes to avoid the local maxima and hit the top-ranked nodes with a high probability.

\begin{table*}[htp]
\centering
\caption{Results for $IndexBasedHillClimbing$ and $IndexAndDegreeBasedHillClimbing$ algorithms}
%\caption{Results for $IndexBasedHillClimbing(G,u,k,maxindex)$ and $IndexAndDegreeBasedHillClimbing(G,u,k,maxindex)$ Algorithms}
\label{sec2results}
\resizebox{\columnwidth}{!}{%
\begin{tabular}{|l|c|c|c|c|c|c|c|c|c|}
\hline
Network &  Nodes & \multicolumn{4}{|c|}{$IndexBasedHillClimbing$} & \multicolumn{4}{|c|}{$IndexAndDegreeBasedHillClimbing$} \\ \hline
%(G,u,k,maxindex) (G,u,k,maxindex)
%	&	&  $Avg(steps)$ & $Std(steps)$ & $avg(repeat-count)$ & $Algo\;Failed(\%)$ &  $Avg(steps)$ & $std(steps)$ & $avg(repeat-count)$ & $Algo\;Failed(\%)$ \\	\hline
	
	&	&  $Avg$ & $Std$ & $Avg$ & $Algo$ &  $Avg$ & $Std$ & $Avg$ & $Algo$ \\
	
	&	& $(steps)$  & $(steps)$  & $(count)$ & $Failed(\%)$ & $(steps)$  &$(steps)$  & $(count)$ & $Failed(\%)$ \\	\hline
 
Astro-Ph	&	14845	&	8.66	&	9.82	&	0.37	&	1.21	&	5.70	&	1.40	&	0.00&	3.27	\\ \hline																			
Buzznet	&	101163	&	3.20	&	0.69	&	0.00	&	0.00	&	3.20	&	0.69	&	0.00	&	0.00	\\ \hline

Cond-Mat	&	13861	&	15.55	&	15.46	&	2.27	&	1.39	&	9.95	&	4.88	&	0.18	&	11.58	\\ \hline

DBLP	&	317080	&	213.13	&	84.89	&	9.79	&	5.45	&	118.76	&	5.34	&	0.00	&	79.60 \\ \hline
																			
Digg	&	261489	&	4.09	&	1.50	&	0.02	&	0.03	&	4.03	&	0.85	&	0.00	&	0.15	\\ \hline
																			
Enron	&	84384	&	5.08	&	0.97	&	0.00	&	0.00	&	5.07	&	0.93	&	0.00	&	0.04	\\ \hline																							
Facebook	&	63392	&	6.06	&	1.90	&	0.30	&	0.00	&	5.99	&	1.80	&	0.31	&	0.02	\\ \hline																					
FB-Wall	&	43953	&	9.04	&	4.48	&	0.59	&	0.00	&	8.65	&	4.12	&	0.48	&	0.04	\\ \hline
																	
Foursquare	&	639014	&	10.84	&	0.38	&	0.00	&	0.00	&	10.84	&	0.38&	0.00	&	0.00	\\ \hline
																			
Gowalla	&	196591	&	4.33	&	4.17	&	0.15	&	0.02	&	3.92	&	1.32	&	0.03	&	0.22	\\ \hline
																	
\end{tabular}	
}																	
\end{table*}

The proposed methods can be further improved by only computing the $h^2-index$ of the neighbors that can be considered for the next step of the crawler instead of computing the $h^2-index$ of all of its neighbors. A node having the degree lower than the $h^2-index$ of the current node cannot have the $h^2-index$ higher than the current node, so, all such neighbors can be discarded. This further fastens up the proposed method.

%We can further improve the proposed method, if we use only the h-index for the initial walk and once the local maxima is hit we can use h2-index. It will further fasten up the method.

We also observe that in all considered real-world networks, the induced subgraph of all the top-ranked nodes (the highest $h^2-index$ nodes) is connected. So, once we hit one top node, all top nodes can be identified. All these nodes can be used to spread the information faster in the network.
%It can further fasten the information spreading process.% can be used to spread the information. %all the considered

\section{Section 3: Rank Estimation of a Node}\label{kshellsec3}

In this section, we employ structural behavior of $h^2-index$ to propose a fast rank estimation method. We observe that the curve of percentile rank versus $h^2-index$ follows a unique pattern in all the considered real-world networks as shown in Figure~\ref{ch4figrank}. The percentile rank of a node is computed as $PercentileRank(u)=\frac{n-R(u)+1}{n}*100$, where $n$ is the network size and $R(u)$ is the rank of node $u$ based on its $h^2-index$ value in the given network. The study of this curve shows the 4-parameter logistic equation to be a best fit. The equation of the curve is given as,

\begin{center}
\begin{equation}\label{kshelleq}
PercentileRank(u) = a_2 + \frac{a_1-a_2}{1+\left( \frac{h^2-index(u)}{x_0}\right) ^p},
\end{equation}
\end{center}
where $a_1$, $a_2$, $x_0$, and $p$ are parameters, and $p$ denotes slope of the logistic curve (also called hill's slope).

The plots are shown in Figure~\ref{ch4figrank}, where black colored circles depict the actual percentile rank of the nodes, and the red colored triangles show the best-fit curve using the $4$-parameter logistic equation. The best fit curve is plotted using scaled Levenberg-Marquardt algorithm \cite{more1978levenberg} with 1000 iterations and 0.0001 tolerance. It can be concluded from the plots that the logistic equation can be used to estimate the percentile rank of a node. Once we estimate the parameters of the equation, the percentile rank of a node can be computed in $O(1)$ time without computing the index values of all the nodes. This approach is similar to the heuristic method proposed for estimating the closeness rank of a node.

\vspace*{.5cm}

\textbf{Estimating Parameters of the Logistic Curve:} We estimate the parameters of the logistic curve by analyzing the curves for different networks. The estimated value of a parameter $x$ is represented by $x'$. In the logistic curve, $a_2$ is close to the highest rank value, so, it can be estimated as, $a_2'=100$. The value of $a_1$ is close to the minimum percentile rank. Using this observation, we estimate $a_1'=1$ and plots show that the estimated curve is close to the best fit. The value of $p$ is computed by averaging the slopes of the considered datasets using the estimated value of $a_1$ and $a_2$. Table~\ref{kshellpvalue} shows $p$ values using the estimated parameters ($a_1=1$ and $a_2=100$) and their average is $1.44$, so $p'=1.44$. $x_0$ is computed using scaled Levenberg-Marquardt algorithm \cite{more1978levenberg} for implementation, and its better estimator can be further proposed. 

\begin{table}[h]
\centering
\caption{Networks versus their p values using the estimated values of $a_1$ and $a_2$}
\label{kshellpvalue}
\begin{tabular}{|l|c|l|c|}
\hline
Network  & p value & Network  & p value \\ \hline
Astro-Ph & 1.49   & Enron  & 0.90   \\ \hline
Buzznet     & 1.26   & Facebook    & 1.37   \\ \hline
Cond-Mat     & 2.44   & FB-Wall & 1.56   \\ \hline
DBLP    & 2.08   & Foursquare  & 0.97   \\ \hline
Digg  & 0.91   & Gowalla & 1.40   \\ \hline
         &         & \textbf{Average}  & 1.44   \\ \hline
\end{tabular}
\end{table}

\subsection*{Discussion}

The percentile rank of a node is estimated using equation \ref{kshelleq}, so,\\ $Estimated \; PercentileRank(u) = a_2' + \frac{a_1'-a_2'}{1+\left( \frac{h^2-index(u)}{x_0'}\right) ^{p'}}$. The plots for the actual rank, best-fit rank, and estimated rank are shown in Figure~\ref{ch4figrank}. To measure the accuracy of the proposed method, we compute absolute error ($|Actual \; Percentile \; Rank - Computed \; Percentile \; Rank|$) for each $h^2-index$ value and find their average. The average error and standard deviation for the estimated rank using the best-fit and estimated parameters are shown in Table~\ref{kshellsec3err}. We further compute Kendall's Tau $(\tau)$, Pearson $(r)$, and Spearman $(\rho)$ correlation coefficients for the actual versus estimated ranks using the best-fit and estimated parameters. The results are shown in Table~\ref{kshellsec3corr} with the values rounded off by two decimal places. The results show that the logistic curve can be efficiently used to estimate the percentile rank. The percentile rank can be converted to actual rank if the network size or its estimated value is known. However, we have considered the percentile rank as it gives a complete idea about the relative rank of the node.
%The estimated rank is computed using the estimated parameters ($a_1=1$, $a_2=100$, and $p=1.44$).

\begin{table*}[htp]
\centering
\caption{Absolute error for percentile ranking using best-fit and estimated curve}
\label{kshellsec3err}
%\resizebox{\columnwidth}{!}{%
\begin{tabular}{|l|c|c|c|c|}
\hline
Network &  \multicolumn{2}{|c|}{Best-Fit} & \multicolumn{2}{|c|}{Estimated} \\ \hline

	&	Avg. Error	&	Std. Dev &	Avg. Error	&	Std. Dev	\\	\hline
Astro-Ph	&	0.36	&	0.28	&	1.96	&	1.02	\\	\hline
Buzznet	&	5.73	&	6.13	&	1.73	&	1.91	\\	\hline
Cond-Mat	&	0.64	&	0.39	&	6.92	&	2.86	\\	\hline
DBLP	&	0.18	&	0.14	&	2.83	&	2.05	\\	\hline
Digg	&	0.36	&	0.29	&	0.95	&	1.31	\\	\hline
Enron	&	0.57	&	0.31	&	1.48	&	1.74	\\	\hline
Facebook	&	6.58	&	5.62	&	2.66	&	1.66	\\	\hline
Fb-Wall	&	0.65	&	0.43	&	2.43	&	0.88	\\	\hline
Foursquare	&	1.54	&	1.46	&	2.02	&	3.01	\\	\hline
Gowalla	&	0.19	&	0.12	&	0.64	&	0.51	\\	\hline
															
\end{tabular}	
%}																	
\end{table*}

\begin{table*}[htp]
\centering
\caption{Correlation coefficients for percentile ranking using best-fit and estimated curve}
\label{kshellsec3corr}
%\resizebox{\columnwidth}{!}{%
\begin{tabular}{|l|c|c|c|c|c|c|}
\hline
Network &  \multicolumn{3}{|c|}{Best-Fit} & \multicolumn{3}{|c|}{Estimated} \\ \hline

	&	Kendall	&	Pearson	&	Spearman	&	Kendall	&	Pearson	&	Spearman	\\	\hline
Astro-Ph	&	1.00	&	1.00	&	1.00	&	1.00	&	1.00	&	1.00	\\	\hline
Buzznet	&	1.00	&	0.83	&	1.00	&	1.00	&	0.99	&	1.00	\\	\hline
Cond-Mat	&	1.00	&	1.00	&	1.00	&	1.00	&	0.99	&	1.00	\\	\hline
DBLP	&	1.00	&	1.00	&	1.00	&	1.00	&	0.99	&	1.00	\\	\hline
Digg	&	1.00	&	0.99	&	1.00	&	1.00	&	0.94	&	1.00	\\	\hline
Enron	&	1.00	&	0.99	&	1.00	&	1.00	&	0.93	&	1.00	\\	\hline
Facebook	&	1.00	&	0.88	&	1.00	&	1.00	&	0.99	&	1.00	\\	\hline
Fb-Wall	&	1.00	&	1.00	&	1.00	&	1.00	&	0.99	&	1.00	\\	\hline
Foursquare	&	1.00	&	0.96	&	1.00	&	1.00	&	0.92	&	1.00	\\	\hline
Gowalla	&	1.00	&	1.00	&	1.00	&	1.00	&	1.00	&	1.00	\\	\hline
																
\end{tabular}	
%}																	
\end{table*}

\begin{figure*}[htp]
     \begin{center}
        \subfigure[Astro-Ph]{%
            \label{fig:first1}
            \includegraphics[width=0.45\textwidth]{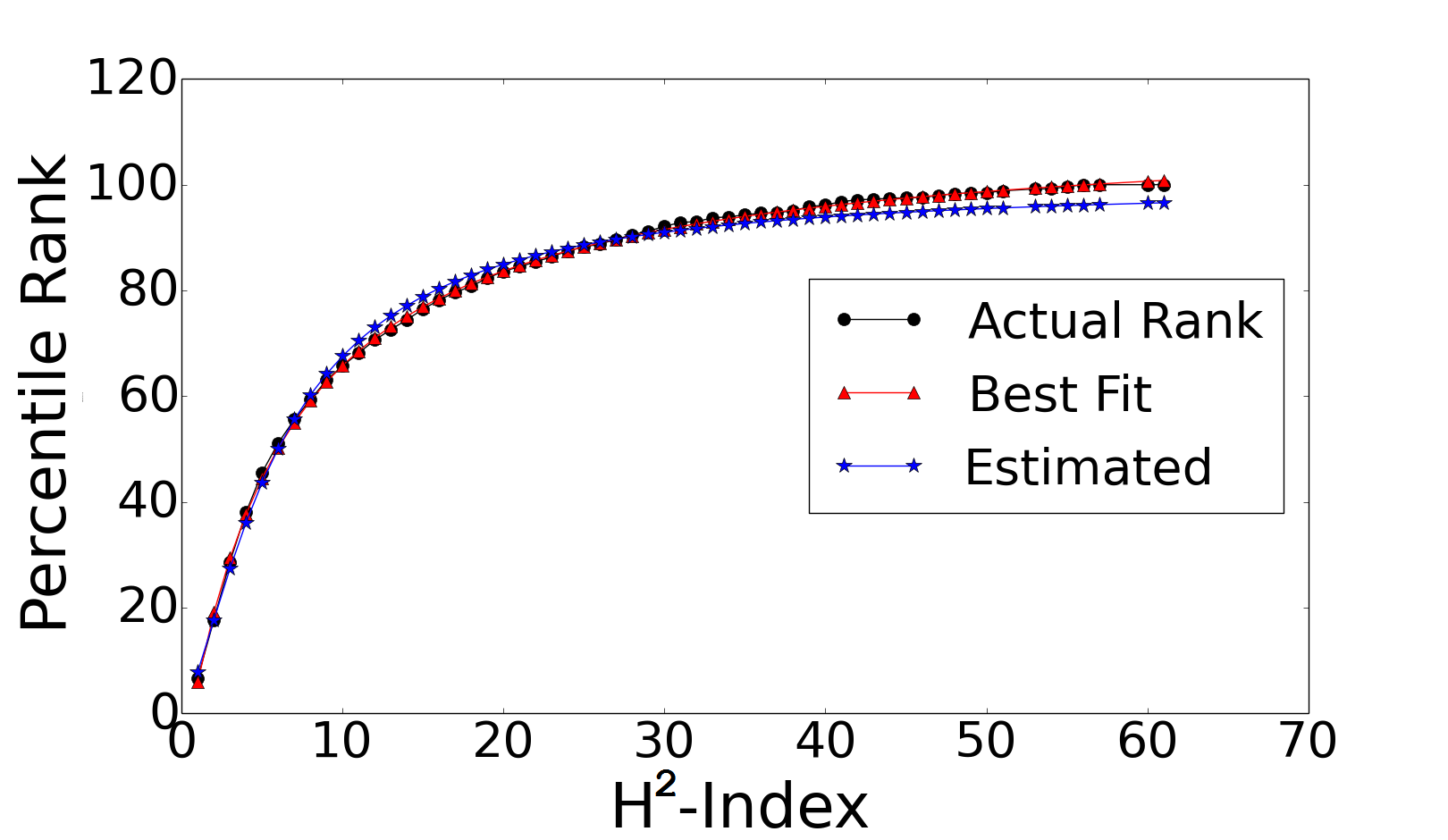}   
        }\quad%
        \subfigure[Buzznet]{%
           \label{fig:second2}
           \includegraphics[width=0.45\textwidth]{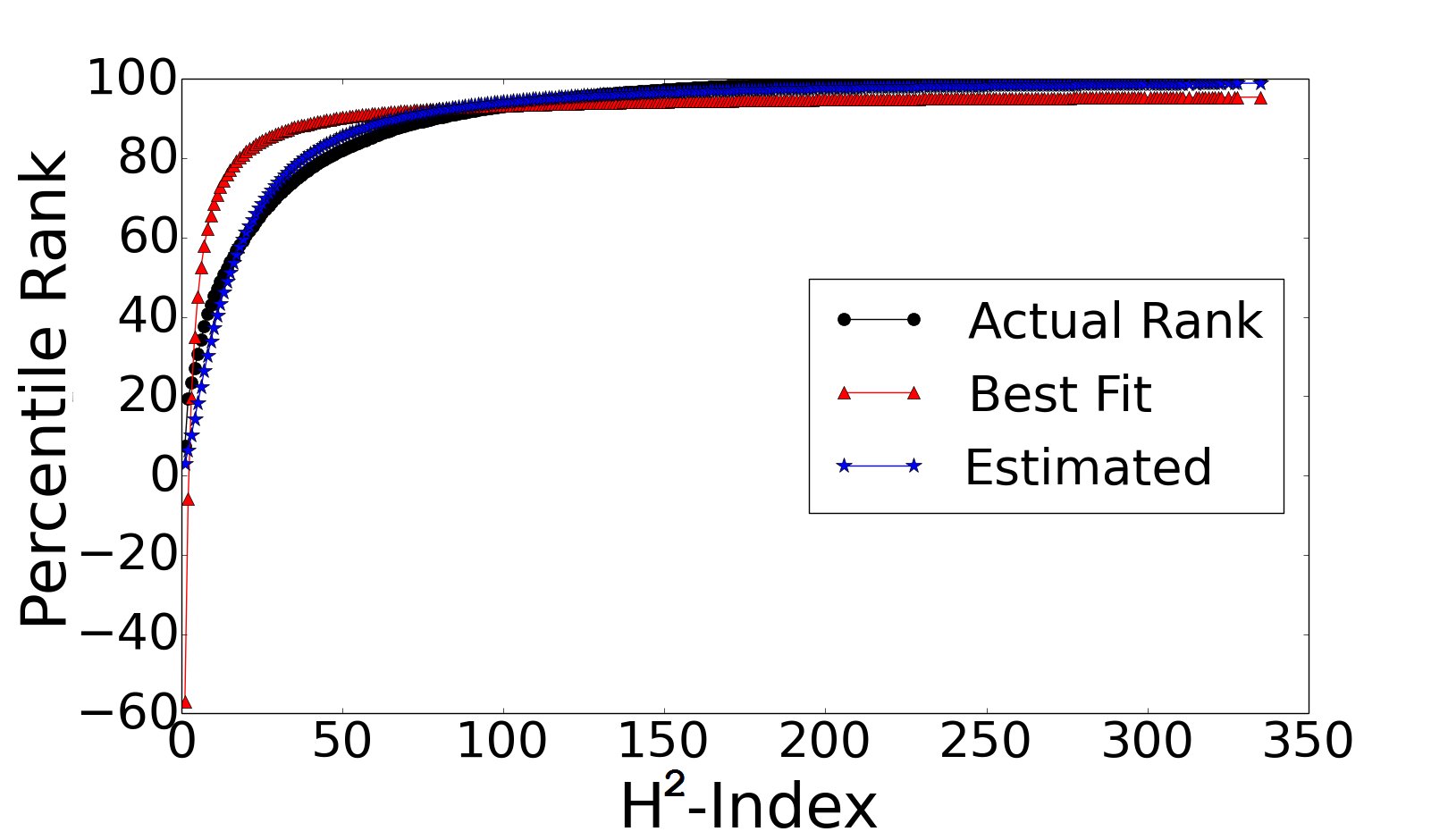}
        }\\
        \subfigure[Cond-Mat]{%
            \label{fig:first3}
            \includegraphics[width=0.45\textwidth]{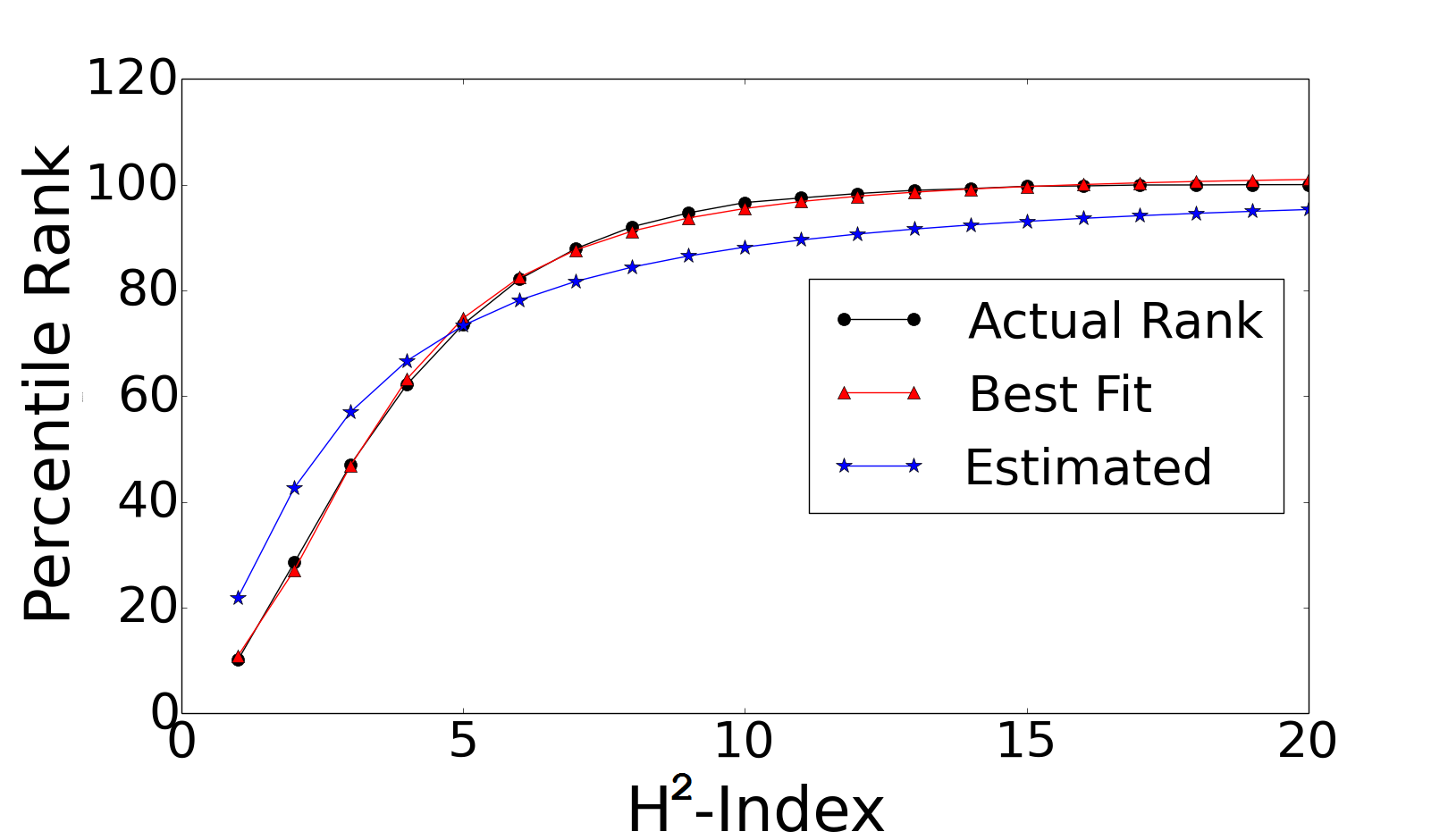}   
        }\quad%
        \subfigure[DBLP]{%
           \label{fig:second4}
           \includegraphics[width=0.45\textwidth]{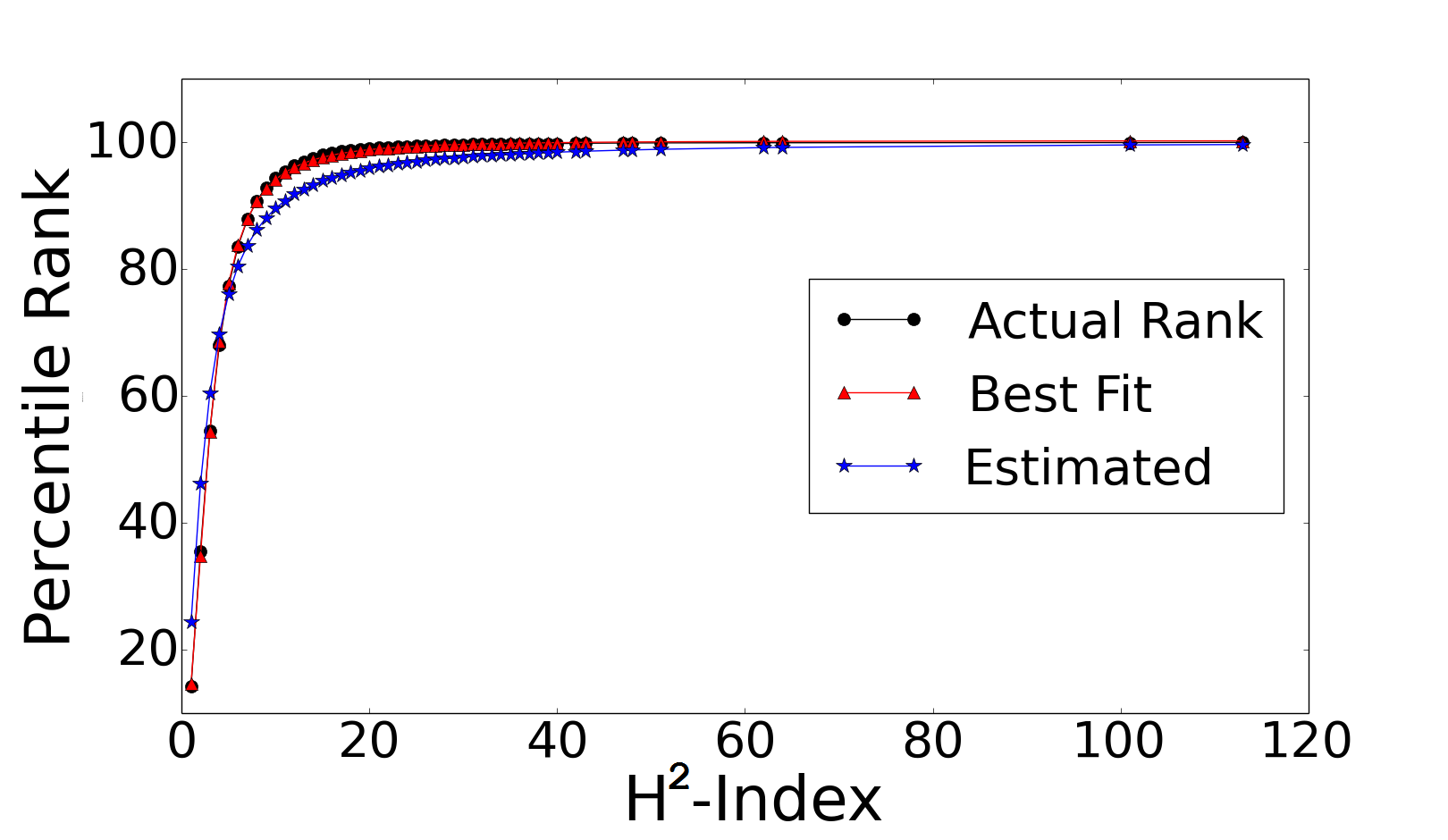}
        }\\
        \subfigure[Digg]{%
            \label{fig:first5}
            \includegraphics[width=0.45\textwidth]{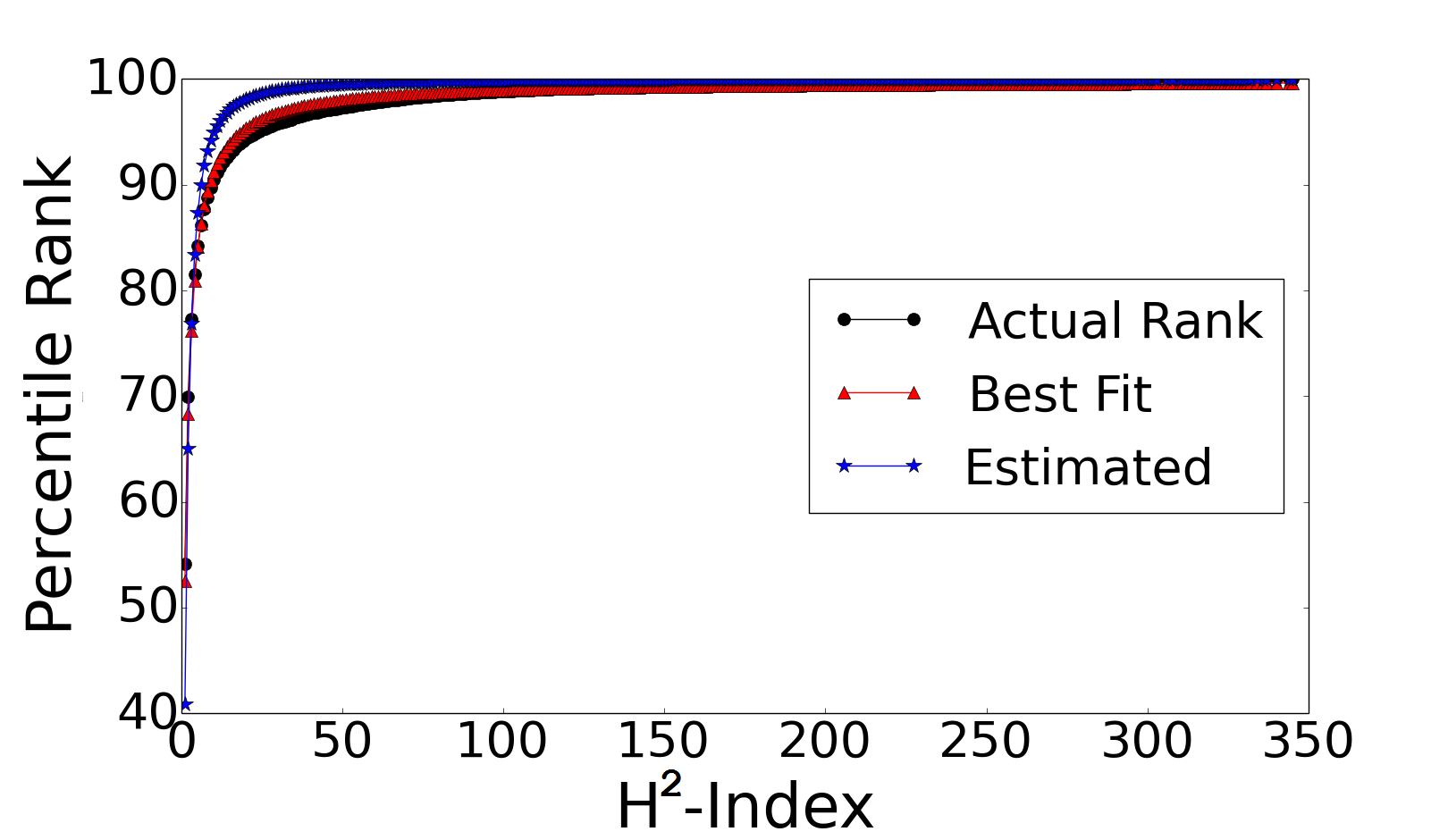}   
        }\quad%
        \subfigure[Enron]{%
           \label{fig:second6}
           \includegraphics[width=0.45\textwidth]{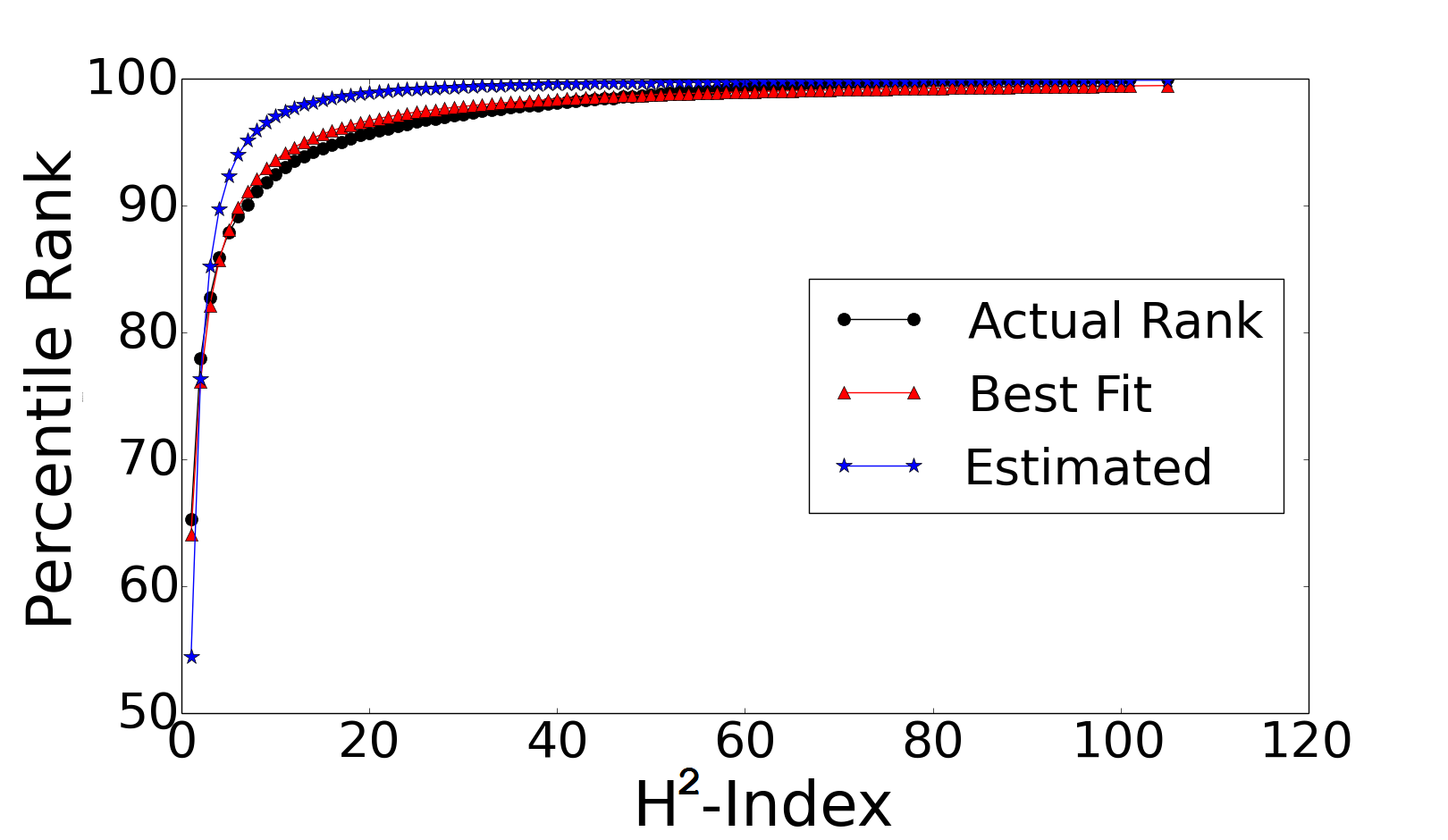}
        }\\
        \subfigure[Facebook]{%
            \label{fig:first7}
            \includegraphics[width=0.45\textwidth]{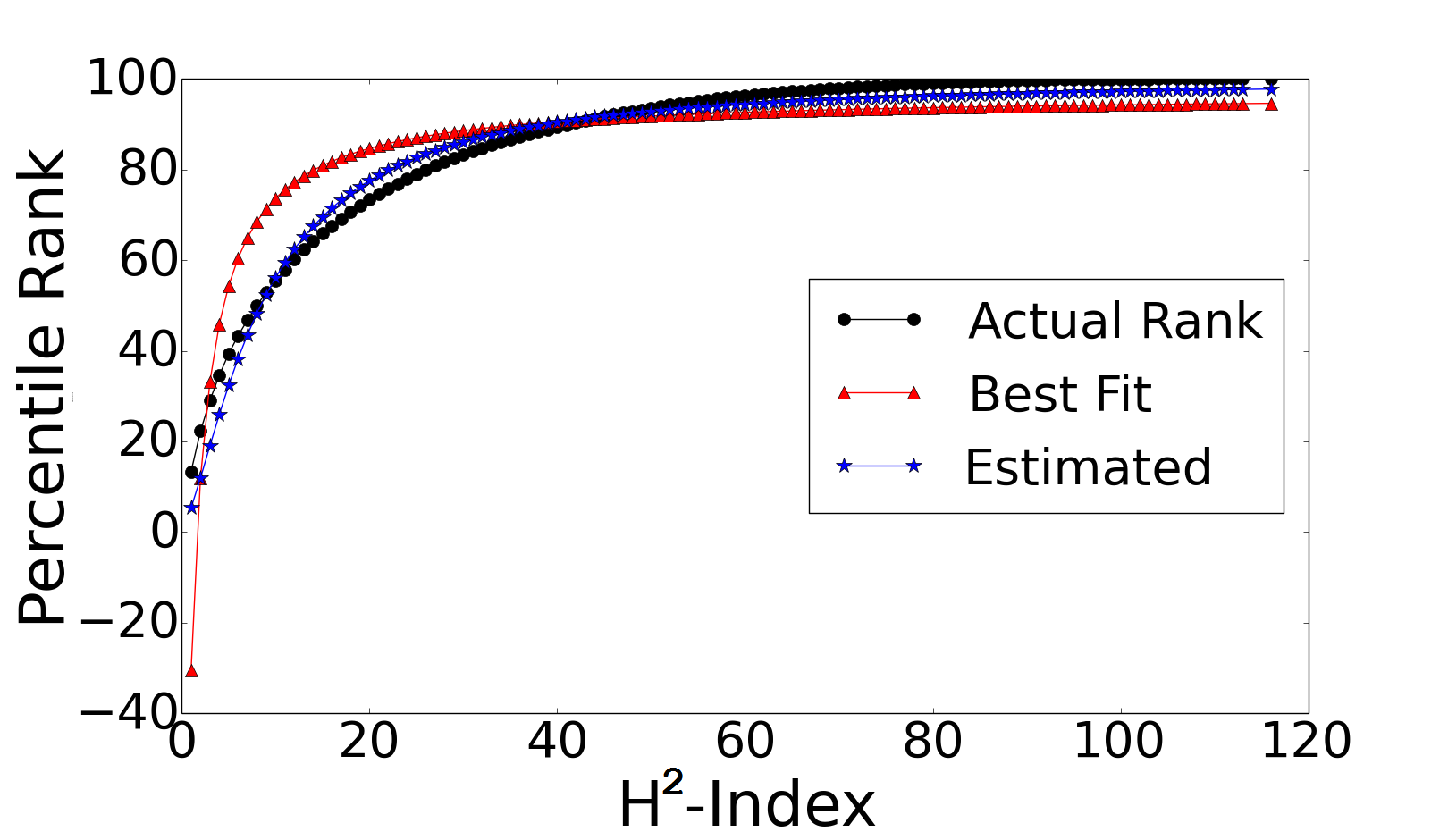}   
        }\quad%
        \subfigure[FB-Wall]{%
           \label{fig:second8}
           \includegraphics[width=0.45\textwidth]{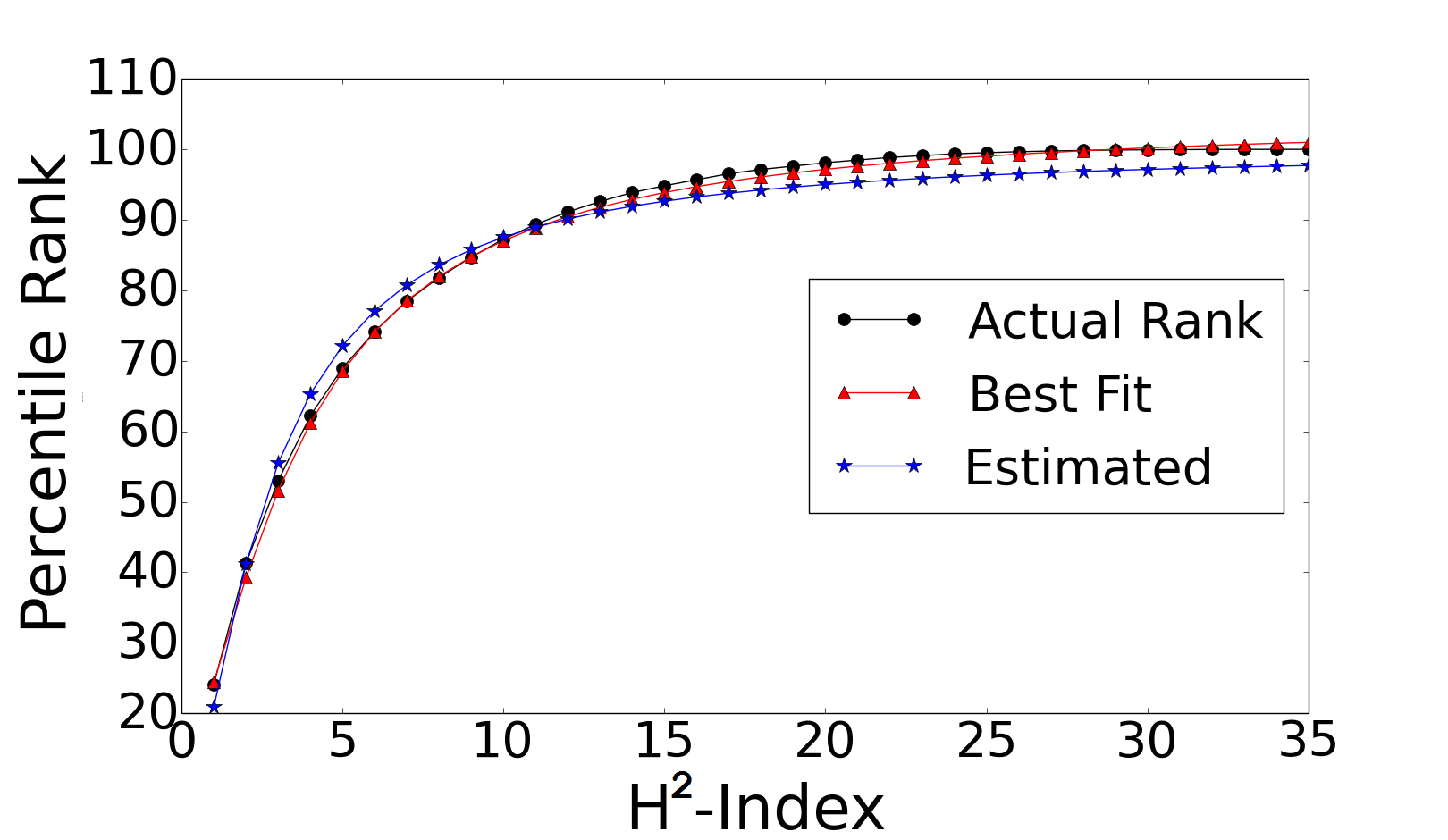}
        }\\
        \subfigure[Foursquare]{%
            \label{fig:first9}
            \includegraphics[width=0.45\textwidth]{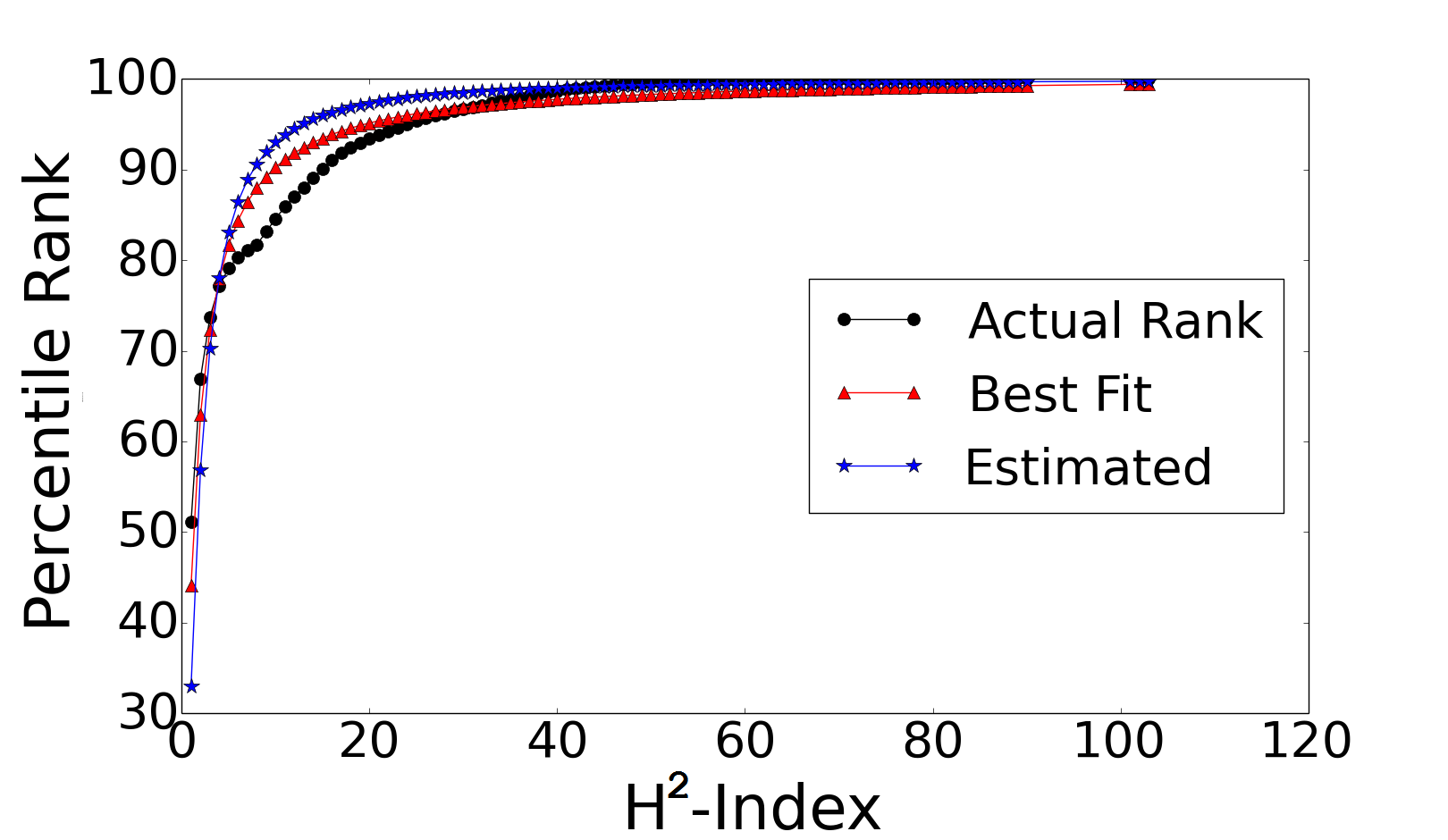}   
        }\quad%
        \subfigure[Gowalla]{%
           \label{fig:second9}
           \includegraphics[width=0.45\textwidth]{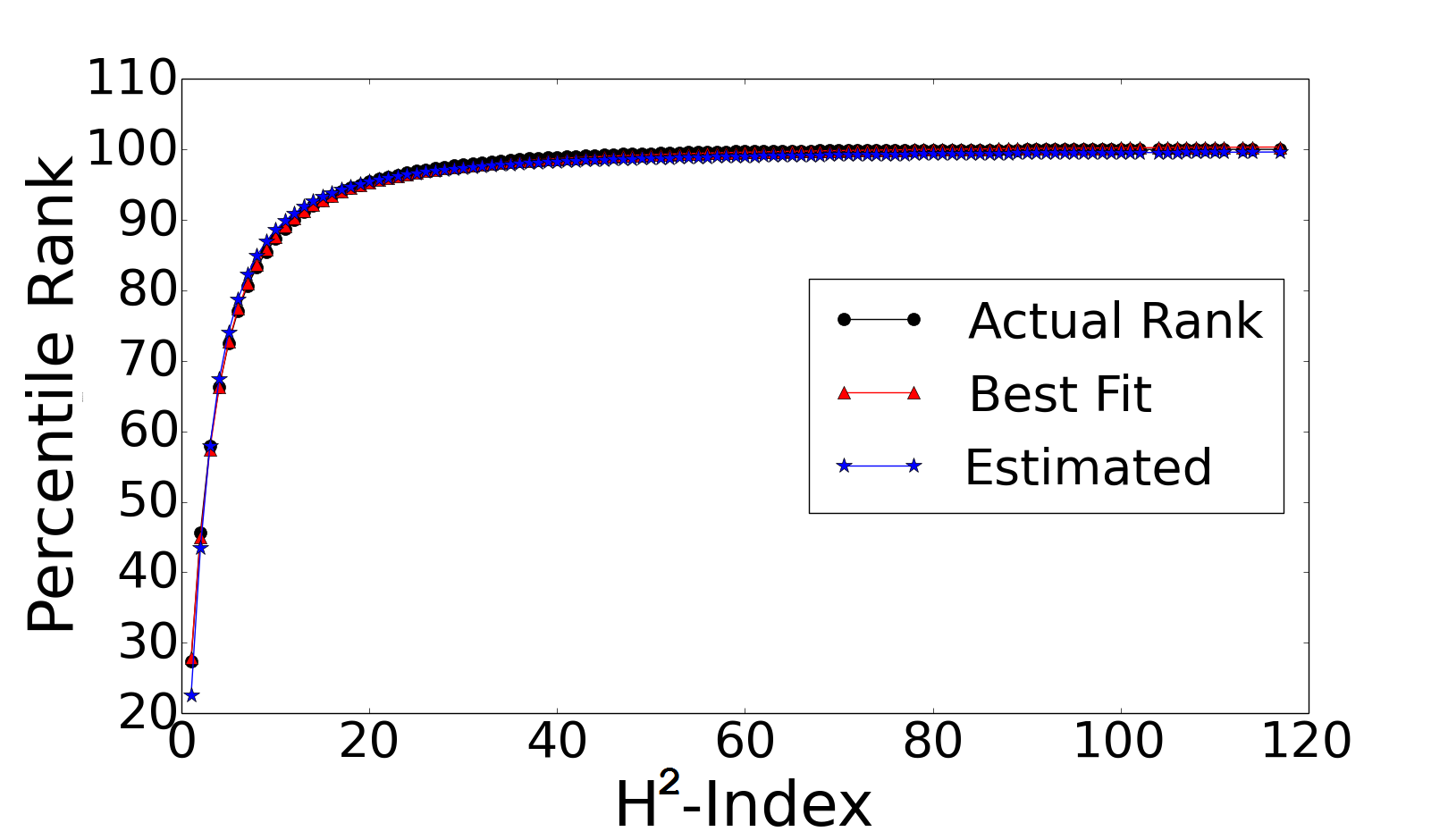}
        }\\
    \caption{Percentile Rank versus $h^2-index$}
   \label{ch4figrank}
   \end{center}
\end{figure*}

\section{Conclusion}\label{conclusion}

In this work, we have shown that $h^2-index$ is a good estimator of shell-index and can be used to identify influential nodes in real-world networks. We also observed that $h^2-index$ has better monotonicity than the shell-index and has better correlation with the spreading power of the nodes. We further proposed hill-climbing based methods to identify the top-ranked nodes. The results show that a top-ranked node can be found in a small number of steps. We also observed that the induced subgraph of all the top-ranked nodes is connected and once we find one top node, all the top nodes can be identified. A mathematical bound for estimating the number of steps to hit a top node in a given scale-free network can be further proposed. In the last section, we discussed a heuristic method to estimate the percentile rank of a node. The accuracy of the proposed method is computed using the absolute error and correlation coefficients. One can further propose better methods for estimating the parameters of the logistic curve which will improve the accuracy of the rank estimation.

\vspace*{.5cm}

\textbf{Acknowledgement}: Authors would like to thank IIT Ropar HPC committee for providing the resources for performing the experiments.

\bibliographystyle{unsrt}
\bibliography{mybib}

\end{document}